\DeclareMathOperator{\Tr}{Tr}
\DeclareMathOperator{\var}{var}
\DeclareMathOperator{\med}{med}
\DeclareMathOperator{\QFT}{QFT}
\newtheorem{theorem}{Theorem}
\newtheorem{proposition}[theorem]{Proposition}
\newtheorem{protocol}[theorem]{Protocol}
\newtheorem{lemma}[theorem]{Lemma}
\newtheorem{corollary}[theorem]{Corollary}
\newcommand{\GHZ}{{\rm GHZ}}
\begin{document}

\title{Direct Fidelity Estimation for Generic Quantum States}
\author{Christopher Vairogs}
\affiliation{Theoretical Division, Los Alamos National Laboratory, Los Alamos, New Mexico 87545, USA}
\affiliation{Department of Physics, University of Illinois at Urbana-Champaign, Urbana, Illinois 61801, USA}
\author{Bin Yan}
\affiliation{Theoretical Division, Los Alamos National Laboratory, Los Alamos, New Mexico 87545, USA}

\date{\today}

\begin{abstract}
Verifying the proper preparation of quantum states is essential in modern quantum information science. Various protocols have been developed to estimate the fidelity of quantum states produced by different parties. Direct fidelity estimation is a leading approach, as it typically requires a number of measurements that scale linearly with the Hilbert space dimension, making it far more efficient than full state tomography. In this article, we introduce a novel fidelity estimation protocol for \emph{generic} quantum states, with an overall computational cost that scales only as the square root of the Hilbert space dimension. Furthermore, our protocol significantly reduces the number of required measurements and the communication cost between parties to \emph{finite}. This protocol leverages the quantum amplitude estimation algorithm in conjunction with classical shadow tomography to achieve these improvements.
\end{abstract}

\maketitle

As quantum devices become increasingly sophisticated, it is crucial to verify that their subroutines prepare quantum states as intended. Specifically, efficiently and accurately certifying that a quantum state has high fidelity with a target state has garnered much attention. For instance, imagine the situation that, as illustrated in Fig.~\ref{fig:illustration}, agent Alice needs to certify a quantum device developed by Bob at a distant location. The task is to directly estimate the fidelity between quantum states generated by their respective devices using only local operations and classical communications.

Direct fidelity estimation (DFE) is one of the most well-known proposals for this task~\cite{Flammia2011, daSilva2011, Huang2024Certifying}. This approach relies on measuring the expectation values of Pauli strings selected via importance sampling. More recently, Classical Shadow Tomography (CST) and its variants~\cite{Huang2020, Helsen2023, Struchalin2021, Elben2022, Zhao2021} have emerged as promising tools for fidelity estimation. The CST protocol employs repeated randomized measurements to construct a representation of an unknown state $\rho$ on a classical computer, from which one can compute its fidelity with a target state $|\psi\rangle$ to arbitrary accuracy. For certain classes of quantum states with classical representations, these methods can achieve highly accurate fidelity estimation with a finite number of measurements independent of the system size. 

However, using states with classical representations for device certification poses a significant risk: Bob may efficiently fake the data to be shared with Alice and spoof her certification process. For generic states without classical representations, these approaches, as well as other proposals~\cite{Seshadri2024, Seshadri2024-2}, require a number of copies of the quantum states (measurements) that generally scale as $\mathcal{O}(d)$, where $d$ is the dimension of the system's Hilbert space. Nevertheless, such scaling significantly outperforms full state tomography by a factor of $d$.

Here, we present an approach that further reduces the computational cost to scale as the square root of the Hilbert space dimension, setting a new benchmark for generic state fidelity estimation. 

In our setup, we assume that one party Alice has access to an arbitrary pure state $|\psi\rangle$ and another party Bob has access to an arbitrary state $\rho$ to be certified. Our approach provides a method for them to construct an estimator for the true fidelity $F \equiv \langle \psi|\rho|\psi\rangle$. The protocol involves two essential ingredients: Bob employs CST on copies of his state and sends the obtained classical data to Alice; With the received information, Alice then performs the quantum amplitude estimation (QAE) algorithm~\cite{Brassard2002} on copies of her state and obtains a fidelity estimate. 

The main result of this work is that Bob only needs to perform a finite number of measurements independent of the system size, while Alice requires $\mathcal{O}(\sqrt{d})$ iterations of QAE. This represents an overall quadratic speedup over the state-of-the art. Moreover, our protocol reduces the number of measurements and the involved communication cost between Alice and Bob to a \emph{finite} value, providing significant advantages compared to the $\mathcal{O}(d)$ scaling in conventional approaches. In the following, we will briefly introduce CST and QAE, and present our fidelity estimation protocol with proved performance guarantee and numerical simulations.

\begin{figure}[b!]
    \centering
    \includegraphics[width=\columnwidth]{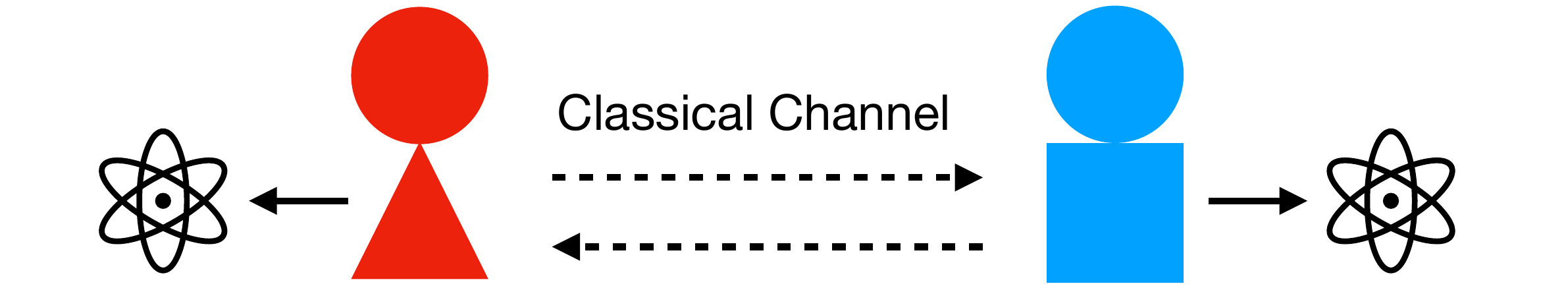}
    \caption{Alice and Bob have local access to their respective quantum states and can share information through a classical channel. The task is to estimate the fidelity between their states with the provided resources.}
    \label{fig:illustration}
\end{figure}

\vspace{8pt}
\emph{Classical Shadow Tomography} allows us to predict many properties of a quantum state with only a few measurements. Consider an $n$-qubit Hilbert space $\mathcal{H}^n$ and let $\mathcal{D}(\mathcal{H}^n)$ denote the set of density operators over $\mathcal{H}^n$. To perform CST on an unknown state $\rho \in \mathcal{D}(\mathcal{H}^n)$, one first samples a unitary $\hat{U}$ (here the hat symbol is used to indicate a random variable) from some ensemble $\mathcal{U}$ of unitaries over $\mathcal{H}^n$ and evolves $\rho$ via $\hat{U}$. The rotated state $\hat{U}\rho\hat{U}^\dagger$ is then measured in the computational basis $\{|\hat{b}\rangle: b\in \{0,1\}^n\}$, yielding an outcome $|\hat{b}\rangle$.

A key ingredient of CST is the map $\mathcal{M}: \mathcal{D}(\mathcal{H}^n) \to \mathcal{D}(\mathcal{H}^n)$ defined for an arbitrary state $\sigma \in \mathcal{D}(\mathcal{H}^n)$ as
\begin{equation}
    \mathcal{M}(\sigma) \equiv \sum_{b\in \{0,1\}^n} \mathbb{E}_{\hat{U}\sim \mathcal{U}}[\hat{U}^\dagger|b\rangle\langle b| \hat{U} \langle b|\hat{U}\sigma\hat{U}^\dagger|b\rangle].
\end{equation}
That is, $\mathcal{M}(\rho)$ is simply the average of the states $\hat{U}^\dagger |\hat{b}\rangle\langle \hat{b}| \hat{U}$ resulting from the CST procedure over the distribution of the unitaries of $\mathcal{U}$ and the distribution of the measurement outcomes. If $\mathcal{U}$ is chosen reasonably (more precisely, it is tomographically complete), then the map $\mathcal{M}$ must have a unique inverse $\mathcal{M}^{-1}$. Given an apt choice of $\mathcal{U}$, the operator $\hat{\rho} \equiv \mathcal{M}^{-1}(\hat{U}^\dagger|\hat{b}\rangle\langle \hat{b}|\hat{U})$ can be computed easily and has a classical description. We thus refer to $\hat{\rho}$ as a \textit{classical snapshot} of $\rho$. The significance of the averaging map $\mathcal{M}$ comes from the fact that
\begin{equation}\label{eq:avg-of-inverse}
    \mathbb{E}_{\hat{U}, \hat{b}}[\hat{\rho}] = \rho,
\end{equation}
which follows simply from the the definitions of $\hat{\rho}$ and $\mathcal{M}$ together with the linearity of $\mathcal{M}^{-1}$. Therefore, we can use a number of different copies of the classical snapshots $\hat{\rho}$ to predict properties of the true state $\rho$. 

The accuracy of this procedure depends on the choice of the random unitary ensemble $\mathcal{U}$ and the particular observable to be evaluated. For our purpose, the target quantity is the fidelity between $\rho$ and a known pure state $|\psi\rangle$. In this case, we can choose $\mathcal{U}$ as the uniformly weighted Clifford group and use only a finite number of sampled $\hat{\rho}$ to approximate the fidelity $\langle \psi|\rho|\psi\rangle$ to arbitrary accuracy. The corresponding classical snapshot can be evaluated as~\cite{Huang2020}
\begin{equation}\label{eq:Clifford-inverse}
    \hat{\rho} = (2^n + 1) \hat{U}^\dagger |\hat{b}\rangle\langle \hat{b}|\hat{U} - I_n.    
\end{equation}

It is worth stressing that, essential to this promise is the assumption that $|\psi\rangle$ has a classical description,  for if it does not, it cannot be processed efficiently with the snapshots on a classical computer. We address this limitation by employing the following quantum amplitude estimation algorithm. 

\vspace{8pt}
\emph{Quantum Amplitude Estimation.}--- The QAE algorithm is built off of Grover's algorithm~\cite{Grover1997, Nielsen00} and quantum phase estimation~\cite{Kitaev1995, Nielsen00} and allows one to more efficiently estimate the probability of certain pre-specified ``good'' outcomes of a computational basis measurement. In the basic setting of QAE, we are given a function $\chi: \{0,1\}^n \to \{0,1\}$ that labels bit strings as \textit{good} and \textit{bad}, with a bit string $z\in \{0,1\}^n$ being good if it satisfies $\chi(z) = 1$ and bad otherwise. Suppose one has access to a unitary $\mathcal{A}$ that produces state
\begin{equation}\label{eq:superposition}
    \mathcal{A}|0\rangle^{\otimes n} = \sum_{z=0}^{2^n-1} \alpha_z |z\rangle,
\end{equation}
for some complex coefficients $\alpha_z$. The goal of QAE is to estimate the probability $a$ that a computational basis measurement on $\mathcal{A}|0\rangle^{\otimes n}$ will generate a good bit string. 

Several modified and improved versions of the QAE algorithm have been introduced in the past decades~\cite{Grinko2021, Suzuki2020, Wie2019, Aaronson2020}. For the sake of simplicity, we will proceed with our analysis using the classic version~\cite{Nielsen00}. In this approach, one initializes the $n$ qubits and another register of $m$ ancillary qubits all in the $|0\rangle$ state. On the state register, apply the unitary $\mathcal{A}$ to generate state~\eqref{eq:superposition} and apply a Hadamard gate to every qubit of the ancillary register. Let us define the $n$-qubit oracles 
\begin{align}
    S_0 &\equiv I^{\otimes n} - 2 (|0\rangle\langle 0|)^{\otimes n}, \\
    S_\chi &\equiv I^{\otimes n} - \sum_{\chi(z) = 1} 2|z\rangle\langle z|,
\end{align}
through which we define the unitary $\mathcal{Q} \equiv -\mathcal{A} \mathcal{S}_0 \mathcal{A}^\dagger \mathcal{S}_\chi$. For $1\leq l \leq m$, we introduce the controlled operators
\begin{align}\label{eq:controlled-op}
 U_l \equiv |0\rangle_l\langle 0|_l \otimes I^{\otimes n} + |1\rangle_l\langle 
 1|_l \otimes \mathcal{Q}^{2^{l-1}},
\end{align}
where $|i\rangle_l\langle i|_l$ denotes the projector $|i\rangle\langle i|$ acting on the $l$-th qubit of the ancillary register for $i = 0,1$. The next step of the QAE algorithm is to successively apply the operators $U_1, \dots, U_{m}$ to the circuit, followed by applying the inverse quantum Fourier transform $\QFT_m^\dagger$ on the ancillary register. The final step of the algorithm is to measure the ancillary qubits in the computational basis to obtain a bit string $\hat{z} \in \{0,1\}^m$ and declare $\hat{a} = \sin^2\left(\pi\left[\sum_{l=1}^m z_l 2^{l-1}\right] /2^m\right)$ to be an estimator of $a$. 

Typically, we define $M \equiv 2^m$ to be the number of iterations of the QAE algorithm. This is because the algorithm can be decomposed into a gate sequence of $M - 1$ controlled-$\mathcal{Q}$ gates. Crucially, the estimator $\hat{a}$ approximates the true value of $a$ with an error that scales as $\mathcal{O}(1/M)$. In contrast, simply estimating $a$ via the frequencies of outcomes for $M$ repeated measurements yields an estimation error that scales as $\mathcal{O}(1/\sqrt{M})$. 


\begin{figure*}[t!]
    \includegraphics[width=\textwidth]{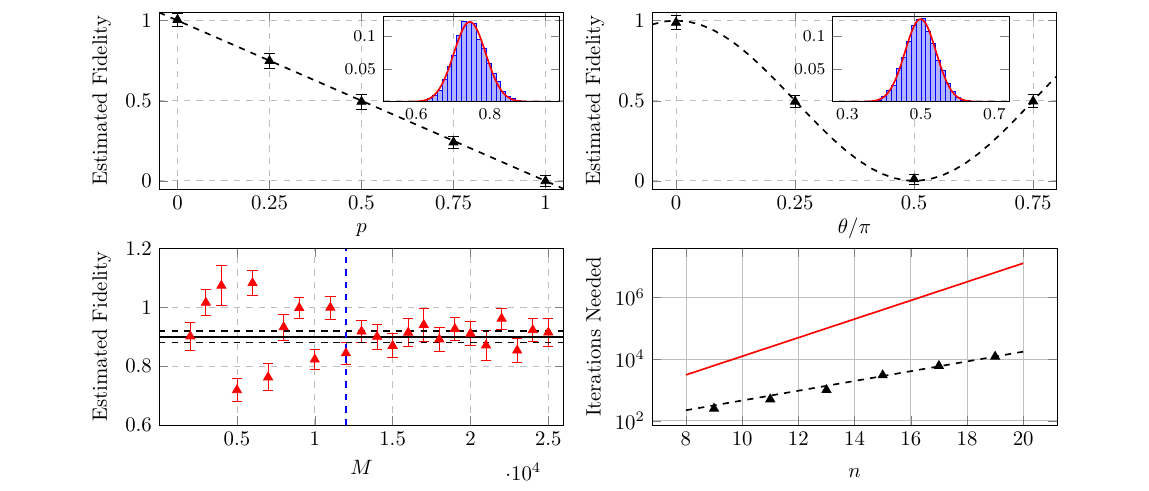}
    \caption{\textbf{Top}: Fidelity between the noisy and pure GHZ states estimated by simulations of our fidelity estimation protocol for depolarizing noise (left) and Pauli Z noise (right). Data points (triangles) correspond to an average of the fidelity estimators produced by $100$ runs of the protocol, while the error bars correspond to the standard deviation of the samples. The dashed lines indicate the true fidelity. In our simulations, we take $N = 1000$, $M = 500$, and $K = 10$. The insets in the left and right panels show the empirical distributions of fidelity estimators for $\theta = \pi/2$ and $p = 0.75$, respectively, each obtained from a sample of $5000$ data points. The red curves represent Gaussian fits. \textbf{Lower Left}: The mean and standard deviation of the estimated fidelity for various $M$. The vertical dashed line corresponds to the value of $M$ for which the fidelity estimation achieves the desired precision, as indicated by the horizontal dashed line (see main text for the procedure). \textbf{Lower Right}: Iterations $M$ needed to estimate fidelity with an accuracy of $0.02$ v.s. system size $n$. In the noisy state preparation of the GHZ state, simultaneous Pauli $Z$ errors can occur with probability $p = 0.1$. The illustrated data is for $N = 1000$ and $K = 10$. The dashed line is the best fit to $\alpha d^\beta$, with $\beta = 0.52$, indicating a $\mathcal{O}(\sqrt{d})$ scaling. The red line corresponds to $\alpha d$, emphasizing the advantage offered by a quadratic speedup.}
    \label{fig:simulation}
\end{figure*}

\vspace{8pt}
\emph{Fidelity Estimation Protocol.}---Suppose that two parties, Alice and Bob, are able to produce multiple copies of their states $|\psi\rangle$ and $\rho$, respectively. Our protocol executes the following steps.

\vspace{5pt}
\textbf{P1:} Bob performs classical shadow tomography with $N$ repetitions using random Clifford unitaries. He then sends the sampled data over to Alice, including the unitary $C_i$ and the measurement outcome $|\hat{b}_i\rangle$, with $1 \le i \le N$.

\textbf{P2:} For each $1\leq i \leq N$, Alice applies the QAE algorithm consisting of $M$ iterations independently $K$ times, to obtain the $K$ estimators $\hat{a}_{i, 1}, \dots, \hat{a}_{i,K}$ of the probability amplitude $a_i \equiv |\langle \hat{b}_i|\hat{C}_i|\psi\rangle|^2$. Next, for each $1\leq i \leq N$, Alice computes
\begin{equation}\label{eq:Fi-comp}
    \hat{F}_i \equiv \frac{(2^n + 1)}{N}\med(\{\hat{a}_{i,j}\}_{j=1}^K) - 1.
\end{equation}
Using this data, Alice computes a median-of-means estimator by splitting the $N$ values of $\hat{F}_i$ into $P$ equally sized parts, computing each part's mean, and then computing the median of the resulting collection of means: 
\begin{equation}\label{eq:Fmed-comp}
    \hat{F}_{\med} \equiv \med\left(\frac{1}{N/P} \sum_{l=(k-1)N/P + 1}^{kN/P}\hat{F}_i \right)_{k=1}^P.
\end{equation}
Finally, Alice declares $\hat{F}_{\med}$ to be an estimator of the true fidelity $F \equiv \langle \psi|\rho|\psi\rangle$.

Here, to implement QAE, Alice chooses the unitary $\mathcal{A}$ in~\eqref{eq:superposition} so that $\mathcal{A}|0\rangle^{\otimes n} = \hat{C}_i|\psi\rangle$ and takes $\chi:\{0,1\}^n \to \{0,1\}$ to be defined so that $\chi(z) = 1$ if and only if $z = \hat{b}_i$. The total number of operations Alice performs is $NMK$.

\vspace{5pt}
We briefly give an overview of why our protocol works. A straightforward computation [see Lemma~3 in the Supplemental Materials (SM)] using~\eqref{eq:Clifford-inverse} and~\eqref{eq:avg-of-inverse} shows that 
\begin{equation}\label{eq:F-expression}
    F = (2^n + 1) \mathbb{E}_{\hat{C}, \hat{b}}[|\langle \hat{b}|\hat{C}|\psi\rangle|^2] - 1.
\end{equation}
The idea is then to learn the value of $|\langle \hat{b}|\hat{C}|\psi\rangle|^2 $ approximately for many instances of $\hat{b}$ and $\hat{C}$ using QAE and then to statistically process the resulting data to obtain an accurate estimate of the right-hand side of~\eqref{eq:F-expression}. To this end, we consider the median of the amplitude estimates $\hat{a}_{i, 1}, \dots, \hat{a}_{i,K}$ in~\eqref{eq:Fi-comp} in order to ensure that QAE gives an accurate estimate with probability arbitrarily close to one; we employ median-of-means estimation in~\eqref{eq:Fmed-comp} to combat the spread in the distribution of $|\langle \hat{b}|\hat{C}|\psi\rangle|^2$. Note that the ordinary sample mean would have worked just as well for this purpose, though we can get a tighter formal guarantee using median-of-means. This leads to our main result.

\begin{theorem}\label{thm:main-thm} For any $\varepsilon, \delta \in (0,1)$, we have 
\begin{equation}
    \Pr(|\hat{F}_{\med} - F| > \varepsilon) \leq \delta 
\end{equation}
provided that the total number $N_A$ of QAE iterations that Alice employs and the number $N_B$ of state copies that Bob uses for CST satisfy
\begin{equation}
    N_A = \mathcal{O}\left(\frac{\ln(1/\varepsilon^2)\ln(1/\delta)\sqrt{d}}{\varepsilon^4}\right),
    N_B = \mathcal{O}\left(\frac{\ln(1/\delta)}{\varepsilon^2}\right).
\end{equation}   
\end{theorem}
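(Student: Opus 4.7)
I would split the total error via the triangle inequality as
\[
|\hat F_{\med}-F|\ \le\ |\hat F_{\med}-\bar F_{\med}|\ +\ |\bar F_{\med}-F|,
\]
where $\bar F_{\med}$ is the oracle estimator obtained by replacing every $\med_j(\hat a_{i,j})$ with the exact amplitude $a_i=|\langle\hat b_i|\hat C_i|\psi\rangle|^2$. The aim is to show each term is at most $\varepsilon/2$ with failure probability at most $\delta/2$, giving the conclusion by a union bound.

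\textbf{Shadow term.} By~(\ref{eq:F-expression}), $F=\mathbb E[(2^n+1)a-1]$, and the standard Clifford-shadow variance bound (shadow norm squared for a pure-state observable) yields $\var[(2^n+1)a-1]\le 3$. The classical median-of-means inequality with $P=\Theta(\ln(1/\delta))$ blocks of size $N/P=\Theta(1/\varepsilon^2)$ then gives $|\bar F_{\med}-F|\le\varepsilon/2$ with probability at least $1-\delta/2$, which pins $N_B=N=\mathcal O(\ln(1/\delta)/\varepsilon^2)$ and matches the Bob-side count.

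\textbf{QAE term.} The classic amplitude-estimation guarantee is $|\hat a_{i,j}-a_i|\le 2\pi\sqrt{a_i(1-a_i)}/M+\pi^2/M^2$ with probability at least $8/\pi^2>1/2$; a Chernoff argument extends the same deterministic bound to $\med_j(\hat a_{i,j})$ except with probability $e^{-\Omega(K)}$. Taking $K=\Theta(\ln(N/\delta))$ and union-bounding over the $N$ samples makes all $N$ medians simultaneously good with probability $\ge 1-\delta/4$. Since the median operator is $1$-Lipschitz in the supremum norm on its arguments, on this event
\[
|\hat F_{\med}-\bar F_{\med}|\ \le\ \max_{k}(2^n+1)\,\frac{P}{N}\sum_{i\in B_k}\Bigl[\frac{c_1\sqrt{a_i}}{M}+\frac{c_2}{M^2}\Bigr].
\]
The $\sqrt d$ speedup now enters through Jensen's inequality, $\mathbb E[\sqrt{a_i}]\le\sqrt{\mathbb E[a_i]}=\sqrt{(F+1)/(2^n+1)}=\mathcal O(1/\sqrt d)$, together with a Bernstein-type concentration for the empirical block sums (valid because averaging over random Cliffords places $\hat C|\psi\rangle$ on a unitary $3$-design, giving the $a_i$ sub-exponential tails around $1/d$) and a union bound over the $P$ blocks, yielding $\frac{P}{N}\sum_{i\in B_k}\sqrt{a_i}=\mathcal O(1/\sqrt d)$ uniformly in $k$ with probability $\ge 1-\delta/4$. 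Substituting gives $|\hat F_{\med}-\bar F_{\med}|=\mathcal O(\sqrt d/M+d/M^2)$, so demanding this be $\le\varepsilon/2$ fixes $M=\Theta(\sqrt d/\varepsilon^2)$ and $N_A=NMK=\mathcal O(\ln(1/\varepsilon^2)\ln(1/\delta)\sqrt d/\varepsilon^4)$, matching the stated bound.

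\textbf{Main obstacle.} The crucial step is extracting the $\sqrt d$ improvement from the coupling between the amplitude-dependent QAE error $\mathcal O(\sqrt{a_i}/M)$ and the shadow identity $\mathbb E[a_i]=\mathcal O(1/d)$: a naive worst-case use of QAE (only $1/M$) would force $M=\Omega(d/\varepsilon)$ and erase the quadratic speedup entirely. Ensuring that the empirical block sums $\frac{P}{N}\sum_{i\in B_k}\sqrt{a_i}$ concentrate at the Jensen bound $1/\sqrt d$ uniformly in $k$—rather than at the worst-case value $1$—is the main difficulty, and the failure probabilities of the shadow concentration, the QAE median amplification, and the $\sqrt a$-block concentration have to be allocated consistently so that their sum is at most $\delta$.
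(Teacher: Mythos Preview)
Your overall decomposition and the shadow-term analysis (median-of-means with $P=\Theta(\ln(1/\delta))$ blocks of size $\Theta(1/\varepsilon^2)$, using the Clifford-shadow variance bound) match the paper's. The gap is in the QAE term.

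You need the block averages $\frac{P}{N}\sum_{i\in B_k}\sqrt{a_i}$ to be $\mathcal O(1/\sqrt d)$ \emph{uniformly in $k$}, and you justify this by asserting that the $3$-design property of the Clifford group gives the $a_i$ sub-exponential tails. That inference is not valid: a $3$-design pins only the first three moments of $a_i$ to their Haar values; it says nothing about the tail beyond what Markov/Chebyshev already give. With only the guaranteed variance bound $\var(a_i)\le 3/(d+1)^2$, Bernstein's inequality for the bounded variables $\sqrt{a_i}\in[0,1]$ carries a range term of order $\ln(P/\delta')/m \sim \varepsilon^2\ln(P/\delta')$ (since the block size is $m=N/P=\Theta(1/\varepsilon^2)$). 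For large $d$ this term dominates $1/\sqrt d$, and after multiplication by the $(d{+}1)/M$ prefactor you are forced into $M=\Omega(d\varepsilon)$, which is linear in $d$ and erases the quadratic speedup. (There is also an arithmetic slip: $\sqrt d/M+d/M^2\le\varepsilon/2$ gives $M=\Theta(\sqrt d/\varepsilon)$, not $\Theta(\sqrt d/\varepsilon^2)$, so even if the concentration held your accounting would not land on the stated $N_A$.)

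The paper extracts the $\sqrt d$ factor by a different mechanism that uses only the second moment. It applies Chebyshev to each $a_i$ \emph{individually}: $\Pr\bigl(a_i>c/(\eta^2(d{+}1))\bigr)\le 3\eta^4$ with $\eta\sim\varepsilon$. On the complementary event one has $\sqrt{a_i}=\mathcal O\bigl(1/(\varepsilon\sqrt d)\bigr)$, so the per-sample QAE error $2\pi\sqrt{a_i}/M+\pi^2/M^2$ is at most $\varepsilon/\bigl(2(d{+}1)\bigr)$ once $M=\Theta(\sqrt d/\varepsilon^2)$. The $\mathcal O(\varepsilon^4)$ Chebyshev failures are then absorbed by a union bound over the $N=\Theta(1/\varepsilon^2)$ samples (total failure $\mathcal O(\varepsilon^2)$), yielding a constant success probability for one batch; the outer median-of-means over $P=\Theta(\ln(1/\delta))$ batches boosts this to $1-\delta$. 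No tail assumption beyond the $3$-design variance bound is invoked, which is precisely what your block-average route would need but cannot get.
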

This demonstrates an overall $\mathcal{O}(\sqrt{d})$ computational complexity ($N_A$) and a finite measurement and communication cost ($N_B$). Proof of the theorem is presented in the SM.

\vspace{8pt}
\emph{Simulations.}--- 
To build confidence in the performance of our protocol, we run a series of numerical experiments. We consider Alice's state $|\psi\rangle$ to be the $n$-qubit GHZ state, which we write as $|\GHZ\rangle$, and Bob's state $\rho$ to be a noisy preparation of $|\GHZ\rangle$. It is worth emphasizing that our protocol promises to work for any generic quantum state. We pick the GHZ state since it stays within the stabilizer formalism when rotated by Clifford unitaries, admitting efficient numerical simulations.

In more details, we first generate a random Clifford circuit $C$ using the algorithm described in~\cite{Bravyi2021} via the software package \textit{Stim}~\cite{Gidney2021}. Subsequently, we evolve $|\GHZ\rangle$ via $C$ and draw a bit string $b\in \{0,1\}^n$ according to the distribution of computational basis measurement outcomes on $C|\GHZ\rangle$. To simulate the QAE step, we wish to know the probability distribution of Alice's estimates for the amplitude $|\langle b|C|\GHZ\rangle|^2$. Fortunately, this distribution has a simple closed-form expression in terms of the number of iterations $M$ in the QAE algorithm and the true amplitude $|\langle b|C|\GHZ\rangle|^2$ (see the SM). Hence, we can simulate Alice's estimate of $|\langle b|C|\GHZ\rangle|^2$ by computing the \emph{exact} value of $|\langle b|C|\GHZ\rangle|^2$, constructing Alice's probability distribution based on of this value, and then sampling from it.

As would be expected, the distribution of Alice's estimates is tightly peaked for large values of $M$. Since the values of $M$ in some of the simulations we consider can be somewhat high, we truncate Alice's distribution to outcomes within a certain range of its peaks so that the sum of the probabilities of the remaining outcomes is greater than $0.999$ and renormalize. We sample from the resulting distribution in the interest of reducing computation time. Using this method, we simulate Alice's selection of $K$ estimators $\hat{a}_1, \dots, \hat{a}_K$ for the true amplitude $|\langle b|C|\GHZ\rangle|^2$. Using the median of these values, we compute the end value $(2^n + 1)\med(a_1, \dots, a_K) - 1$. We iterate through the aforementioned procedure $N$ times, randomly picking a Clifford unitary for each run, and then compute the average of the $N$ end values. We record this average as one of Alice's fidelity estimators. Note that in our numerical method, we opt to directly use the sample mean instead of partitioning the $N$ end values into equally sized partitions and computing their median-of-means (as described formally in Theorem~\ref{thm:main-thm}) because we observed no meaningful difference between the two numerically. 

We employed two strategies to prepare noisy versions of $|\GHZ\rangle$ on Bob's side. The first model consists of a 9-qubit GHZ state subjecting to a uniform Pauli Z error with probability $p$, i.e.,
\begin{equation}\label{eq:Z-error}
    \rho = (1-p)|\GHZ\rangle\langle \GHZ| + p Z^{\otimes 9}|\GHZ\rangle\langle \GHZ|Z^{\otimes 9}.
\end{equation}
For the second model, we studied a 8-qubit GHZ state that undergoes depolarizing (twirling channel). In this case, we have
\begin{equation}
    \rho = \int dC \hat{C}^\dagger e^{i\theta Z_1} \hat{C}|\GHZ\rangle\langle\GHZ|\hat{C}^\dagger e^{-i\theta Z_1}\hat{C},
\end{equation}
where the integration is with respect to the uniformly weighted measure on the Clifford group. 

For both noise models, Fig.~\ref{fig:simulation} (Top) demonstrated excellent agreements between the estimated and the true fidelities for different values of the noise parameters. Furthermore, the resources represented by $N$, $M$, and $K$ in our simulations are much lower than the thresholds at which Theorem~\ref{thm:main-thm} guarantees an accuracy $\varepsilon$ equal to the standard deviations of the empirical distributions. 

Finally, we plot the scaling of the number of iterations $M$ that Bob requires to estimate the fidelity with a given accuracy (defined below) for fixed $N$ and $K$ for the Pauli Z noise model~\eqref{eq:Z-error}. To accomplish this task, we compute the means of the empirical fidelity estimator distributions for varying values of $M$. We then choose the first value of $M$ such that for the majority of higher values of $M$, their error bars lie within the strip of radius $0.02$ about the true fidelity (see Fig.~\ref{fig:simulation}, Lower Left, and additional simulations in the SM)~\footnote{More sophisticated treatment can be employed to extract the value of $M$ for a given accuracy threshold. The informal criteria used here is sufficient to demonstrated a clear scaling of $M$ $v.s.$ system size.}. We plot these selected values of $M$ as a function of qubit count $n$ in the lower right panel of Fig.~\ref{fig:simulation}. The data clearly reveals that the values of $M$ scale approximately as $\sqrt{d}$. The desired $\mathcal{O}(\sqrt{d})$ scaling occurs despite again the choice of $N$ and $K$ smaller than that prescribed by Theorem~\ref{thm:main-thm} for $\varepsilon = 0.02$.

\vspace{8pt}
\emph{Discussion.}--- In this study, we introduced a protocol for estimating the fidelity between generic quantum states. The direct fidelity estimation protocol~\cite{Flammia2011, daSilva2011}, which is the best-known strategy for estimating the fidelity of two \textit{arbitrary} states, requires $\mathcal{O}(d)$ state copies. Our scheme requires only $\mathcal{O}(\sqrt{d})$ resources, offering a \textit{quadratic speedup}. Furthermore, both the number of quantum measurements and the amount of information shared between involved parties are reduced to \emph{finite}, significantly outperforming known methods.

We also remark that it is difficult to imagine a straightforward modification of CST that would result in our speedup. For instance, suppose that Bob samples $N$ classical snapshots $\hat{\rho}_1, \dots, \hat{\rho}_N$ of his state $\rho$. Then $F \approx \frac{1}{N} \sum_{i=1}^N \Tr(\hat{\rho}_i |\psi\rangle\langle \psi|)$. Since the snapshots $\hat{\rho}_i$ are observables with classical descriptions, Alice may attempt to apply CST via randomized measurements on her state $|\psi\rangle\langle \psi|$ to estimate the values $\Tr(\hat{\rho}_i |\psi\rangle\langle \psi|)$, i.e., getting classical snapshots $\hat{\psi}_j$ for $|\psi\rangle\langle \psi|$. However, per definition of the operator $\hat{\rho}_i$ in Eq.~(\ref{eq:Clifford-inverse}), the number of classical snapshots Alice needs to have a guaranteed performance (i.e, bound the estimation error to arbitrary precision) is $\mathcal{O}(d^2)$.

If one considers the quadratic speedup of DFE over full state tomography as ``statistical'' (importance sampling), the further quadratic speedup offered by our protocol can be viewed to have a ``quantum'' origin, as it essentially relies on quantum amplitude amplification. This speedup is also reminiscent of the optimal quadratic speedup in quantum sensing and unstructured quantum search~\cite{Bennett1997, Boyer1998}. It is therefore interesting to investigate whether the performance of our protocol is optimal as well. Since our fidelity estimation scheme already considers the introduction of a quantum algorithm, another possible direction is to investigate the advantage offered by using \textit{quantum communication} between Alice and Bob in our scenario.

\vspace{8pt}
Acknowledgement.---This work was supported in part by the U.S. Department of Energy, Office of Science, Office of Advanced Scientific Computing Research, through the Quantum Internet to Accelerate Scientific Discovery Program, and in part by the LDRD program at Los Alamos. C.V. acknowledges support from the Center for Nonlinear Studies. C.V. would also like to thank Faisal Alam and Shivan Mittal for helpful discussions.

\bibliography{references}

\clearpage
\appendix

\setcounter{page}{1}
\renewcommand\thefigure{\thesection\arabic{figure}}
\setcounter{figure}{0} 

\onecolumngrid

\begin{center}
\large{ Supplemental Material for \\ ``Direct Fidelity Estimation for Generic Quantum States''
}
\end{center}

\begin{center}
Christopher Vairogs$^{1,2}$ and Bin Yan$^{1}$\\
\small{$^{1}$\textit{Theoretical Division, Los Alamos National Laboratory, Los Alamos, New Mexico 87545, USA}}\\
\small{$^{2}$\textit{Department of Physics, University of Illinois at Urbana-Champaign, Urbana, Illinois 61801, USA}}\\
\small{(Dated: \today)}
\end{center}

Unless otherwise stated, all states are assumed to belong to an $n$-qubit Hilbert space $(\mathbb{C}^2)^n$. Write $d\equiv 2^n$. Let $\mathcal{D}((\mathbb{C}^2)^n)$ and $\mathcal{U}((\mathbb{C}^2)^n)$ denote the spaces of density and unitary operators over $(\mathbb{C}^2)^n$, respectively. For any pure state $|\varphi\rangle \in (\mathbb{C}^2)^n$, we use the notational convention that $\varphi \equiv |\varphi\rangle\langle \varphi|$. The fidelity between two states $\rho, \sigma \in \mathcal{D}((\mathbb{C}^2)^n)$ is defined as 
\begin{equation}
    F(\rho, \sigma) \equiv \Tr\left[\sqrt{\sqrt{\rho}\sigma\sqrt{\rho}}\right]^2,
\end{equation}
so that for any $|\varphi\rangle \in (\mathbb{C}^2)^n$ we have $F(\rho, \varphi) = \langle \varphi|\rho|\varphi\rangle$.

\begin{section}{Review of Classical Shadow Tomography}

In the typical set-up of classical shadow tomography, one samples a unitary $\hat{U}$, where the hat symbol is used to indicate a random variable, from some ensemble $\mathcal{U}$ of unitaries and evolves $\rho$ via $\hat{U}$. The resulting state is subsequently measured in the computational basis $\{|b\rangle: b\in \{0,1\}^n\}$, obtaining outcome $|\hat{b}\rangle$. The above procedure is repeated many times. The obtained data (including the sampled unitaries and the measured bit strings) are recorded and can be used to predict many properties of the original quantum state $\rho$. The precision of this approach depends on the choice of random unitary ensemble together with the type of observables to be evaluated. Here, we are interested in predicting the fidelity of a pure state with respect to $\rho$ (i.e., the observable is the projector $\psi$). In this case, one typically samples the random unitary from the Clifford group. We present the technical aspects for this case in the following.

Define the quantum channel $\mathcal{M}: \mathcal{D}((\mathbb{C}^2)^n) \to \mathcal{D}((\mathbb{C}^2)^n)$ as follows. Let $\sigma\in \mathcal{D}((\mathbb{C}^2)^n)$ be an arbitrary state. Consider an operator $\hat{C}$ sampled randomly from the uniformly weighted Clifford group $\mathcal{C}_n$. Suppose the rotated state $\hat{C}\sigma \hat{C}^\dagger$ is measured in the computational basis $\{|b\rangle\}_{b\in \{0,1\}^n}$, yielding a post-measurement state $|\hat{b}\rangle$. Treating $\hat{C}$ and $|\hat{b}\rangle$ as random variables, we define $\mathcal{M}(\sigma)$ to be the expectation value of $\hat{C}^\dagger |\hat{b}\rangle\langle \hat{b}|\hat{C}$ over the distributions of both $\hat{C}$ and $|\hat{b}\rangle$. That is, 
    \begin{equation}
        \mathcal{M}(\sigma) = \sum_{b\in \{0,1\}^n} \mathbb{E}_{\hat{C} \sim \mathcal{C}_n} [\hat{C}^\dagger |b\rangle\langle b|\hat{C} \times \langle b| \hat{C}\sigma \hat{C}^\dagger|b\rangle]. 
    \end{equation}

The following proposition is a standard result of classical shadow tomography~\cite{Huang2020}.
\begin{proposition}[Classical Shadow Tomography]\label{prop:CST}
    Let $\rho\in \mathcal{D}((\mathbb{C}^2)^n)$ be an arbitrary state. 
    Take $\hat{C}$ again to be sampled randomly from the uniformly weighted Clifford group $\mathcal{C}_n$. Suppose that the rotated state $\hat{C}\rho \hat{C}^\dagger$ obtained from $\rho$ is measured in the computational basis $\{|b\rangle\}_{b\in \{0,1\}^n}$, yielding post-measurement state $|\hat{b}\rangle$. The following hold.
    \begin{enumerate}
        \item The map $\mathcal{M}$ is invertible and $\mathcal{M}^{-1}$ gives the following equality of random variables: 
        \begin{equation}
            \mathcal{M}^{-1}(\hat{C}^\dagger |\hat{b}\rangle\langle \hat{b}| \hat{C}) = (d+1)\hat{C}^\dagger|\hat{b}\rangle\langle\hat{b}|\hat{C} - I_n. 
        \end{equation}
        \item Let $O$ be a Hermitian operator over $(\mathbb{C}^2)^n$. If we define the random variable $\hat{o} \equiv \Tr\left[\mathcal{M}^{-1}(\hat{C}^\dagger|\hat{b}\rangle\langle \hat{b}|\hat{C}) O\right]$, then
        \begin{equation}
            \mathbb{E}[\hat{o}] = \Tr[O\rho]
        \end{equation}
        and
        \begin{equation}
            \var[\hat{o}] \leq 3 \Tr[O^2].
        \end{equation}
    \end{enumerate}        
\end{proposition}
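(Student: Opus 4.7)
The plan is to first establish Part~1, the explicit form of $\mathcal{M}^{-1}$, by reducing the Clifford average in the definition of $\mathcal{M}$ to a Haar integral, and then to derive Part~2 by substituting this form into $\hat{o}$ and carrying out first and second moment computations. For Part~1, I would begin by noting that the integrand in $\mathcal{M}(\sigma)$ carries one factor of $\hat{C}$ and one of $\hat{C}^\dagger$ in each of two positions, so the Clifford average is a function of $\hat{C}^{\otimes 2}$ and $(\hat{C}^\dagger)^{\otimes 2}$. Since $\mathcal{C}_n$ is a unitary $2$-design, this average coincides with the corresponding Haar average over $\mathcal{U}((\mathbb{C}^2)^n)$. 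For each fixed $b$, the vector $\hat{C}^\dagger|b\rangle$ is then Haar-uniform on the unit sphere, so all $d$ summands are equal and
\[
\mathcal{M}(\sigma) \;=\; d\,\mathbb{E}_{|\phi\rangle\sim\mathrm{Haar}}\bigl[\langle\phi|\sigma|\phi\rangle\,|\phi\rangle\langle\phi|\bigr].
\]
Applying the standard Haar second-moment identity $\mathbb{E}[|\phi\rangle\langle\phi|^{\otimes 2}] = (I+F)/(d(d+1))$ (or an equivalent Wick contraction in components) yields $\mathcal{M}(\sigma) = [\sigma + \Tr(\sigma) I_n]/(d+1)$. This map is trace-preserving and affine, so it inverts to $\mathcal{M}^{-1}(X) = (d+1)X - \Tr(X) I_n$; specializing to the unit-trace operator $\hat{C}^\dagger|\hat{b}\rangle\langle\hat{b}|\hat{C}$ produces the claimed equality of random variables.

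For Part~2, unbiasedness is essentially immediate. By the tower property, $\mathbb{E}[\hat{C}^\dagger|\hat{b}\rangle\langle\hat{b}|\hat{C}] = \mathcal{M}(\rho)$, so linearity of $\mathcal{M}^{-1}$ and of the trace give $\mathbb{E}[\hat{o}] = \Tr[\mathcal{M}^{-1}(\mathcal{M}(\rho))\,O] = \Tr[O\rho]$. For the variance, I would use $\var[\hat{o}] \leq \mathbb{E}[\hat{o}^2]$ and substitute the explicit inverse from Part~1 to write $\hat{o} = (d+1)\langle\hat{b}|\hat{C}O\hat{C}^\dagger|\hat{b}\rangle - \Tr(O)$. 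Squaring and taking expectation converts the bound into a Clifford average of $\langle b|\hat{C}O\hat{C}^\dagger|b\rangle^{2}$ weighted by the measurement probability $\langle b|\hat{C}\rho\hat{C}^\dagger|b\rangle$, which contributes a third factor each of $\hat{C}$ and $\hat{C}^\dagger$. I would then invoke the fact that $\mathcal{C}_n$ is a unitary $3$-design to replace the Clifford average by a Haar third-moment integral, expand over the six permutations of $S_3$, and decompose $O = O_0 + (\Tr(O)/d)\,I_n$ into its traceless part $O_0$ and its scalar part so that many cross-terms either cancel or become controllable. The end result is a polynomial in $\Tr(O)$, $\Tr(O^2)$, $\Tr(O\rho)$, and $\Tr(\rho)$; using $\Tr(\rho) = 1$ and $\|\rho\|_\infty \leq 1$, this should be majorized by $3\Tr(O^2)$.

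The main obstacle is precisely the cancellation in the variance calculation: the $(d+1)^2$ factor coming from $\hat{o}^2$ must be matched exactly by the $1/(d(d+1)(d+2))$-type suppression in the $3$-design twirl, and the pieces proportional to $\Tr(O)\Tr(O\rho)$ and $\Tr(O)^2$ must recombine without leaving a term that scales with $d$. Organizing the $S_3$ sum carefully, and using the traceless/scalar decomposition of $O$ so that contributions involving $\Tr(O_0) = 0$ drop out, is the technical heart of the argument; by contrast, the $2$-design reduction for Part~1 and the unbiasedness half of Part~2 are routine once one commits to the design properties of the Clifford group.
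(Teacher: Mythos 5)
The paper itself does not prove Proposition~\ref{prop:CST}: it quotes it as a standard result and defers to Ref.~\cite{Huang2020}, so the relevant comparison is with that standard argument, which your sketch reconstructs correctly. Your Part~1 is right: the balanced degree-$(2,2)$ dependence on $\hat{C}$ justifies the $2$-design replacement, the Haar second moment gives $\mathcal{M}(\sigma)=[\sigma+\Tr(\sigma)I_n]/(d+1)$, and since this map preserves trace its inverse is $\mathcal{M}^{-1}(X)=(d+1)X-\Tr(X)I_n$, which on the unit-trace snapshot yields the claimed formula; the unbiasedness argument in Part~2 is likewise correct. For the variance, your plan (write $\hat{o}=(d+1)\langle\hat{b}|\hat{C}O\hat{C}^\dagger|\hat{b}\rangle-\Tr(O)$, invoke the $3$-design property, expand over $S_3$) is the standard route and does close, but the cancellation you single out as the technical heart can be bypassed entirely: the snapshot has unit trace, so $\hat{o}=\Tr[O_0\,\mathcal{M}^{-1}(\hat{C}^\dagger|\hat{b}\rangle\langle\hat{b}|\hat{C})]+\Tr(O)/d$ with $O_0$ the traceless part, hence $\var[\hat{o}]=\var[\hat{o}_0]\leq\mathbb{E}[\hat{o}_0^2]$. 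Running the third-moment computation with $O_0$ alone, every contraction containing $\Tr(O_0)$ vanishes and one gets
\begin{equation}
    \mathbb{E}[\hat{o}_0^2]=\frac{d+1}{d+2}\left(\Tr(O_0^2)+2\Tr(O_0^2\rho)\right)\leq 3\Tr(O_0^2)\leq 3\Tr(O^2),
\end{equation}
using $\Tr(O_0^2\rho)\leq\Tr(O_0^2)$ (from $\rho\leq I_n$) and $\Tr(O^2)=\Tr(O_0^2)+\Tr(O)^2/d$. So the $(d+1)^2$ prefactor is absorbed by the $1/[d(d+1)(d+2)]$ Haar weight times the $d$-fold sum over $b$ without any delicate recombination of $\Tr(O)\Tr(O\rho)$ or $\Tr(O)^2$ terms; your approach is sound, just organized less efficiently than necessary, and with this reduction it matches the proof in the cited reference.
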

Note that while $\mathcal{M}$ is a quantum channel and $\mathcal{M}^{-1}$ is well-defined, the map $\mathcal{M}^{-1}$ is not itself a channel.

\begin{lemma}\label{lemma:overlap-statistics}
    Let $\rho\in \mathcal{D}((\mathbb{C}^2)^n)$ be a state that is not-necessarily pure and let $|\psi\rangle \in (\mathbb{C}^2)^n$ be a pure state. Let $\hat{C}$ be an operator sampled randomly from the uniformly weighted Clifford group $\mathcal{C}_n$. Consider that the rotated state $\hat{C}\rho \hat{C}^\dagger$ is measured in the computational basis $\{|b\rangle\}_{b\in \{0,1\}^n}$, yielding post-measurement state $|\hat{b}\rangle$. Then 
    \begin{equation}\label{eq:overlap-expectation}
        \mathbb{E}\left[|\langle \hat{b}|\hat{C}|\psi\rangle|^2\right] = \frac{1+ F(\rho, \psi)}{d+1}
    \end{equation}
    and
    \begin{equation}\label{eq:overlap-variance}               \var\left[|\langle\hat{b}|\hat{C}|\psi\rangle|^2\right] \leq \frac{3}{(d+1)^2}.
    \end{equation}
\end{lemma}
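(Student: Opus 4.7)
The plan is to reduce both claims directly to Proposition~\ref{prop:CST} applied with the observable $O$ taken to be the rank-one projector $\psi \equiv |\psi\rangle\langle\psi|$. With this choice, define the single random variable $\hat{o} \equiv \Tr[\mathcal{M}^{-1}(\hat{C}^\dagger|\hat{b}\rangle\langle\hat{b}|\hat{C})\psi]$. Substituting the explicit form of $\mathcal{M}^{-1}$ from part~1 of the proposition, using linearity and cyclicity of the trace, and noting that $\Tr[\psi] = 1$, one obtains the affine identity
\begin{equation}
\hat{o} = (d+1)\,|\langle \hat{b}|\hat{C}|\psi\rangle|^2 - 1,
\end{equation}
which expresses the overlap of interest as a simple shift and rescaling of $\hat{o}$.

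For the mean, I would apply the first assertion of part~2 of Proposition~\ref{prop:CST}, which yields $\mathbb{E}[\hat{o}] = \Tr[\psi\rho] = F(\rho,\psi)$, where the last equality uses that $\psi$ is pure so $\Tr[\psi\rho] = \langle\psi|\rho|\psi\rangle$. Taking expectations on both sides of the affine identity and solving for $\mathbb{E}[|\langle\hat{b}|\hat{C}|\psi\rangle|^2]$ gives exactly $(1 + F(\rho,\psi))/(d+1)$, which is~\eqref{eq:overlap-expectation}.

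For the variance, I would use that variance is invariant under translation and scales quadratically under multiplication by a constant, so $\var[\hat{o}] = (d+1)^2\,\var\!\left[|\langle\hat{b}|\hat{C}|\psi\rangle|^2\right]$. The second assertion of part~2 of Proposition~\ref{prop:CST} then bounds $\var[\hat{o}] \leq 3\Tr[\psi^2] = 3$, where once again purity of $\psi$ gives $\Tr[\psi^2] = 1$. Dividing through by $(d+1)^2$ yields~\eqref{eq:overlap-variance}. There is no substantive obstacle here: the lemma is essentially a restatement of the two CST identities specialized to the pure-state projector $\psi$, and the only small point requiring care is bookkeeping the factors of $(d+1)$ and the additive $-1$ that arise from inverting the shadow channel, which must be undone after taking expectations.
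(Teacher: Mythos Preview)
Your proposal is correct and follows essentially the same approach as the paper: both apply Proposition~\ref{prop:CST} with $O=\psi$, use the explicit form of $\mathcal{M}^{-1}$ to write $\hat{o}=(d+1)|\langle\hat{b}|\hat{C}|\psi\rangle|^2-1$, and then read off the mean from $\mathbb{E}[\hat{o}]=\Tr[\psi\rho]$ and the variance bound from $\var[\hat{o}]\leq 3\Tr[\psi^2]=3$ via the affine scaling of variance. The only cosmetic difference is that the paper writes the variance step by inverting the affine relation first and then applying the bound, whereas you bound $\var[\hat{o}]$ first and divide; the content is identical.
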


\begin{proof} By Proposition~\ref{prop:CST}, we have that
\begin{equation}
    F(\rho, \psi) = \Tr[\rho\psi] = \mathbb{E}\left[\Tr\left[\mathcal{M}^{-1}(\hat{C}^\dagger|\hat{b}\rangle\langle\hat{b}|\hat{C}) \psi \right]\right] = (d+1)\mathbb{E}\left[\Tr\left[\hat{C}^\dagger|\hat{b}\rangle\langle\hat{b}|\hat{C} \psi\right]\right] - 1 = (d+1)\mathbb{E}\left[|\langle \hat{b}|\hat{C}|\psi\rangle|^2\right] - 1.
\end{equation}
Equation~\eqref{eq:overlap-expectation} follows. By Proposition~\ref{prop:CST} and a property of the variance, we also have that 
\begin{align}    
\var\left[|\langle \hat{b}|\hat{C}|\psi\rangle|^2\right] &= \var\left[\frac{1}{d+1}
    \Tr\left[\mathcal{M}^{-1}(\hat{C}^\dagger|\hat{b}\rangle\langle\hat{b}|\hat{C})\psi\right] + \frac{1}{d+1}\right]\\  &= \left(\frac{1}{d+1}\right)^2 \var\left[\Tr\left[\mathcal{M}^{-1}(\hat{C}^\dagger|\hat{b}\rangle\langle\hat{b}|\hat{C})\psi\right]\right] \\ &\leq 
    \left(\frac{1}{d+1}\right)^2 \left(3 \Tr\left[\psi^2\right]\right) \\
    &= \frac{3}{(d+1)^2}.
\end{align} \end{proof}

\end{section}

\begin{section}{Some Concentration Inequalities}

\begin{proposition}[Hoeffding's inequality]
    Let $\hat{X}_1, \dots, \hat{X}_N$ be independent random variables with the property that $a_i \leq \hat{X}_i \leq b_i$ almost surely for $1\leq i \leq N$. Then for any $\varepsilon > 0$
    \begin{equation}
        \Pr\left( \sum_{i=1}^N \hat{X}_i - \mathbb{E}\left[\sum_{i=1}^N \hat{X}_i \right] \geq \varepsilon \right) \leq \exp\left(\frac{-2\varepsilon^2}{\sum_{i=1}^N (b_i - a_i)^2} \right)
    \end{equation}
    and
    \begin{equation}
        \Pr\left( \left|\sum_{i=1}^N \hat{X}_i - \mathbb{E}\left[\sum_{i=1}^N \hat{X}_i \right] \right| \geq \varepsilon \right) \leq 2\exp\left(\frac{-2\varepsilon^2}{\sum_{i=1}^N (b_i - a_i)^2} \right).
    \end{equation}
\end{proposition}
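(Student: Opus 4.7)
The plan is to prove Hoeffding's inequality by the standard Chernoff/moment-generating-function route, reducing the tail bound for the sum to a per-variable estimate and then optimizing a free parameter. Let $S \equiv \sum_{i=1}^N \hat{X}_i$ and $\mu \equiv \mathbb{E}[S]$. First I would introduce an auxiliary parameter $t > 0$, apply the monotone transformation $x \mapsto e^{tx}$, and use Markov's inequality to write
\begin{equation}
\Pr(S - \mu \geq \varepsilon) \leq e^{-t\varepsilon}\, \mathbb{E}\!\left[e^{t(S-\mu)}\right].
\end{equation}
Independence of the $\hat{X}_i$ then factors the expectation, yielding
\begin{equation}
\mathbb{E}\!\left[e^{t(S-\mu)}\right] = \prod_{i=1}^N \mathbb{E}\!\left[e^{t(\hat{X}_i - \mathbb{E}[\hat{X}_i])}\right].
\end{equation}

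The main work, and the step I expect to be the principal obstacle, is proving the single-variable bound known as Hoeffding's lemma: if $\hat{Y}$ is a centered random variable with $\alpha \leq \hat{Y} \leq \beta$ almost surely, then
\begin{equation}
\mathbb{E}\!\left[e^{t\hat{Y}}\right] \leq \exp\!\left(\tfrac{t^2 (\beta-\alpha)^2}{8}\right).
\end{equation}
My plan for this is to use convexity of $x \mapsto e^{tx}$ on $[\alpha,\beta]$ to bound $e^{t\hat{Y}}$ by the linear interpolation $\tfrac{\beta - \hat{Y}}{\beta - \alpha} e^{t\alpha} + \tfrac{\hat{Y}-\alpha}{\beta-\alpha}e^{t\beta}$, take expectations (using $\mathbb{E}[\hat{Y}]=0$), and then show that the resulting function of $t$, written as $\exp(\varphi(s))$ after a change of variables $s = t(\beta-\alpha)$, satisfies $\varphi(0) = \varphi'(0) = 0$ and $\varphi''(s) \leq 1/4$. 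Taylor's theorem with remainder then yields $\varphi(s) \leq s^2/8$, which is the claimed bound. Applying this per-variable lemma with $\alpha = a_i - \mathbb{E}[\hat{X}_i]$, $\beta = b_i - \mathbb{E}[\hat{X}_i]$ (so $\beta - \alpha = b_i - a_i$) gives
\begin{equation}
\Pr(S - \mu \geq \varepsilon) \leq \exp\!\left(-t\varepsilon + \tfrac{t^2}{8}\sum_{i=1}^N (b_i - a_i)^2\right).
\end{equation}

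To conclude the one-sided inequality I would optimize over $t > 0$; differentiating the exponent gives the minimizer $t^\star = 4\varepsilon / \sum_i (b_i-a_i)^2$, and substituting yields the claimed exponent $-2\varepsilon^2/\sum_i (b_i-a_i)^2$. For the two-sided statement, I would apply the same argument to $-\hat{X}_1, \dots, -\hat{X}_N$ (whose ranges are $[-b_i, -a_i]$, of the same length $b_i - a_i$) to obtain the matching lower-tail bound, and then combine the two by a union bound, picking up the factor of $2$. The only genuinely technical step is the calculus estimate $\varphi'' \leq 1/4$ inside Hoeffding's lemma; everything else is mechanical application of Markov, independence, and optimization.
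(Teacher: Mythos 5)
Your proof is correct: the Chernoff/moment-generating-function reduction, Hoeffding's lemma via convexity plus the Taylor estimate $\varphi''(s) \leq 1/4$, the optimization $t^\star = 4\varepsilon/\sum_i (b_i-a_i)^2$ giving the exponent $-2\varepsilon^2/\sum_i(b_i-a_i)^2$, and the union bound over the two tails for the factor of $2$ are all sound. Note, however, that the paper offers no proof of this proposition at all --- it is quoted as a standard concentration inequality and used as a black box in Lemma~2 --- so there is no in-paper argument to compare against; what you have written is the canonical textbook derivation and correctly fills that gap.
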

Hoeffding's inequality is useful for bounding the probability that a sample mean deviates from the true mean of the underlying distribution by a small amount. However, sometimes we wish to bound deviations of a data sample's measure of central tendency from a value that is not guaranteed to be the mean of the underlying distribution. In this scenario, the following lemma is desirable.

\begin{lemma}\label{lemma:median-concentration}
    Let $a\in \mathbb{R}$ and let $\hat{a}_1, \dots, \hat{a}_N$ be independent random variables such that $\Pr(|\hat{a}_i - a| > \varepsilon) \leq \delta<1/2$ for some $\varepsilon > 0$ and $\frac{1}{2} > \delta > 0$. Then 
    \begin{equation}
        \Pr(|\med(\{\hat{a}_i\}_{i=1}^N) - a| > \varepsilon) \leq \exp\left(-2\left(\frac{1}{2} - \delta \right)^2 N\right).
    \end{equation}
\end{lemma}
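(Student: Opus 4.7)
My plan is to reduce the event that the median is far from $a$ to a straightforward counting event, and then apply Hoeffding's inequality to indicator variables.

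First I would make the key combinatorial observation: if $|\med(\{\hat{a}_i\}_{i=1}^N) - a| > \varepsilon$, then strictly more than half (or at least half, depending on parity) of the samples must satisfy $|\hat{a}_i - a| > \varepsilon$. Concretely, if the median exceeds $a + \varepsilon$, then at least $\lceil N/2\rceil$ of the order statistics must exceed $a+\varepsilon$; a symmetric statement holds on the other side. Defining the indicator variables $\hat{Y}_i \equiv \mathbb{1}[|\hat{a}_i - a| > \varepsilon]$, this yields the inclusion of events
\begin{equation}
    \{|\med(\{\hat{a}_i\}_{i=1}^N) - a| > \varepsilon\} \subseteq \Bigl\{ \sum_{i=1}^N \hat{Y}_i \geq N/2 \Bigr\},
\end{equation}
so the left-hand probability is upper-bounded by the probability of the counting event on the right.

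Next, I would note that the $\hat{Y}_i$ are independent Bernoulli random variables with means $p_i = \Pr(|\hat{a}_i - a| > \varepsilon) \leq \delta$, hence $\mathbb{E}[\sum_i \hat{Y}_i] \leq N\delta$. Since each $\hat{Y}_i\in[0,1]$, I would apply Hoeffding's inequality with $a_i = 0$, $b_i = 1$. The threshold $N/2$ lies at least $N(1/2 - \delta)$ above the true mean, so rewriting
\begin{equation}
    \Pr\Bigl(\sum_{i=1}^N \hat{Y}_i \geq N/2\Bigr) \leq \Pr\Bigl(\sum_{i=1}^N \hat{Y}_i - \mathbb{E}\Bigl[\sum_{i=1}^N \hat{Y}_i\Bigr] \geq N(1/2 - \delta)\Bigr)
\end{equation}
and invoking Hoeffding's bound with $\varepsilon_H = N(1/2 - \delta)$ and $\sum_i (b_i - a_i)^2 = N$ yields the claimed factor $\exp(-2(1/2 - \delta)^2 N)$.

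I do not expect a real obstacle here; this is a textbook median-amplification argument. The only subtle point is the reduction step, where one must handle the even- and odd-$N$ definitions of the median carefully to make sure that "median off by more than $\varepsilon$" really forces at least $N/2$ of the samples to be off by more than $\varepsilon$. Everything else is a direct application of Hoeffding's inequality, using the hypothesis $\delta < 1/2$ only to guarantee a positive deviation $N(1/2-\delta)$ so that Hoeffding gives a nontrivial bound.
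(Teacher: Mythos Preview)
Your proposal is correct and matches the paper's argument essentially line for line: reduce the median event to the event that at least $N/2$ of the indicators $\hat{Y}_i=\mathbb{1}[|\hat a_i-a|>\varepsilon]$ fire, then apply Hoeffding's inequality to these Bernoulli variables with mean at most $\delta$. The only cosmetic difference is that the paper packages the sum as a binomial $\hat B(N,\delta)$ before invoking Hoeffding, whereas you work directly with $\sum_i \hat Y_i$ and use $\mathbb{E}[\sum_i\hat Y_i]\le N\delta$; both routes give the same bound.
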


\begin{proof} The condition that $|\med(\{\hat{a}_i\}_{i=1}^N) - a| > \varepsilon$ is equivalent to the condition that $|\hat{a}_i - a| > \varepsilon$ for at least $N/2$ of the $\hat{a}_i$. Since $\Pr(|\hat{a}_i - a| > \varepsilon) \leq \delta$ for $1\leq i \leq N$, it follows that 
\begin{align}
    \Pr(|\med(\{\hat{a}_i\}_{i=1}^N) - a| > \varepsilon) 
    &\leq 
    \Pr\left(\hat{B}(N, \delta) \geq \frac{N}{2}\right) \\
    &=
    \Pr\left(\hat{B}(N, \delta) - N\delta \geq N\left(\frac{1}{2} - \delta \right) \right) \\
    &=
    \Pr\left(\hat{B}(N, \delta) - \mathbb{E}\left[\hat{B}(N, \delta)\right] \geq N \left(\frac{1}{2} - \delta\right)\right) \\
    &\leq\label{eq:Hoeffding-application-med} \exp\left(\frac{-2N^2\left(\frac{1}{2} - \delta\right)^2}{N}\right) \\
    &= \exp\left(-2\left(\frac{1}{2} - \delta\right)^2 N\right),
\end{align}
where $\hat{B}(N, \delta)$ is a binomial random variable with $N$ trials and success probability $\varepsilon$. Inequality~\eqref{eq:Hoeffding-application-med} is obtained by viewing $\hat{B}(N, \delta)$ as the sum of $N$ independent (Bernoulli) random variables that assume value 1 with probability $\delta$ and 0 with probability $1 - \delta$ and applying Hoeffding's inequality.
\end{proof}

\end{section}

\begin{section}{Review of Quantum Amplitude Estimation}

Fix a state $|\Psi\rangle \in (\mathbb{C}^2)^n$. Let $m$ be a positive integer and define $M\equiv 2^m$. Choose an orthonormal basis $\{|z\rangle_n\}_{z = 0}^{2^n-1}$ for $(\mathbb{C}^2)^n$. Choose another orthonormal basis $\{|z\rangle_m\}_{z = 0}^{M-1}$ for $(\mathbb{C}^2)^m$. Consider a boolean function $\chi: \{0, \dots, 2^n-1\} \to \{0,1\}$. Then $|\Psi\rangle$ can be uniquely written as $|\Psi\rangle = |\Psi_0\rangle + |\Psi_1\rangle$, where $|\Psi_0\rangle$ and $|\Psi_1\rangle$ are linear combinations of states $|z\rangle_n$ with $z\in \{0, \dots, 2^n-1\}$ for which $\chi(z) = 0$ and $\chi(z) = 1$, respectively. The vectors $|\Psi_0\rangle$, $|\Psi_1\rangle$ are not necessarily normalized. The goal of the quantum amplitude estimation algorithm~\cite{Brassard2002} of Brassard, Hoyer, Mosca and Tapp is to estimate the probability amplitude $a \equiv \langle \Psi_1|\Psi_1\rangle$. The algorithm assumes access to a unitary $\mathcal{A}$ that prepares the state $|\Psi\rangle$ from the state $|0\rangle_n$. That is, $\mathcal{A}|0\rangle_n = |\Psi\rangle$.

\begin{protocol}[Quantum Amplitude Estimation Protocol]\label{qae-protocol}
The protocol goes as follows.
\begin{enumerate}
    \item Initialize a state register of $n$ qubits and another register of $m$ ancilla qubits, all in the $|0\rangle$ state. 
    
    \item On the state register, apply the unitary $\mathcal{A}$. On the ancilla qubits, apply a quantum Fourier transform $\QFT_M$.
    
    \item Define the $n$-qubit unitaries $\mathcal{S}_0 \equiv I_n - 2|0\rangle\langle 0|_n$ and 
    \begin{equation}
        \mathcal{S}_\chi \equiv I_n - 2\sum_{\stackrel{z\in \{0, \dots, 2^n - 1\}:}{ \chi(z) = 1}}|z\rangle\langle z|_n.
    \end{equation}
    Define $\mathcal{Q} \equiv -\mathcal{A}\mathcal{S}_{0}\mathcal{A}^\dagger \mathcal{S}_{\chi}$. Apply the gate $\Lambda_M(\mathcal{Q})$ simultaneously to the ancilla and state registers, where 
    \begin{equation}
        \Lambda_M(\mathcal{Q}) \equiv \sum_{j=0}^{M-1} |j\rangle\langle j|_m \otimes \mathcal{Q}^j. 
    \end{equation}
    Note that $\Lambda_M(\mathcal{Q})$ may be decomposed into a sequence of two-qubit controlled-$\mathcal{Q}$ gates as mentioned in the main text. 
    \item Apply the inverse quantum Fourier transform $\QFT_M^\dagger$ on the ancilla register. 
    \item Measure the ancilla qubits in the basis $\{|z\rangle_m\}_{z=0}^{M-1}$ to obtain an integer $\hat{y}\in \{0, M-1\}$.
    \item Compute $\hat{a} = \sin^2(\pi \hat{y}/M)$ and declare $\hat{a}$ to be an estimator of $a$. 
\end{enumerate}
\end{protocol}
The number of iterations of the algorithm described above is \textit{defined} to be $M = 2^m$. For an illustration of the algorithm, see Fig. 1 of~\cite{Brassard2002}. The following proposition comes from Theorem 12 of the same paper. 

\begin{proposition}\label{prop:qae-guarantee}
    For the amplitude estimation protocol (Protocol~\ref{qae-protocol}), we have
    \begin{equation}
        \Pr\left(|\hat{a} - a| > \frac{2\pi \sqrt{a(1-a)}}{M} + \frac{\pi^2}{M^2}\right) \leq 1- \frac{8}{\pi^2} \approx 0.1894.
    \end{equation}
\end{proposition}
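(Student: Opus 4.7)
The plan is to reduce the proposition to the standard quantum phase estimation tail bound by recognizing Protocol~\ref{qae-protocol} as phase estimation applied to the Grover-type operator $\mathcal{Q}$, and then to convert the phase error into an amplitude error through a trigonometric identity. First I would analyze $\mathcal{Q} = -\mathcal{A}\mathcal{S}_0\mathcal{A}^\dagger \mathcal{S}_\chi$ restricted to the two-dimensional real subspace $V$ spanned by the normalized vectors $|\Psi_1\rangle/\sqrt{a}$ and $|\Psi_0\rangle/\sqrt{1-a}$. A direct computation shows that $\mathcal{S}_\chi$ reflects about $|\Psi_0\rangle$ while $-\mathcal{A}\mathcal{S}_0\mathcal{A}^\dagger$ reflects about $-|\Psi\rangle$, so $\mathcal{Q}$ preserves $V$ and acts on it as a rotation by angle $2\theta_a$, where $\theta_a \in [0,\pi/2]$ is defined by $\sin^2\theta_a = a$. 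Hence the eigenvalues of $\mathcal{Q}|_V$ are $e^{\pm 2i\theta_a}$, with eigenvectors $|\Psi_\pm\rangle$ satisfying $|\Psi\rangle = \tfrac{-i}{\sqrt{2}}(e^{i\theta_a}|\Psi_+\rangle - e^{-i\theta_a}|\Psi_-\rangle)$.

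Next I would observe that Steps 1--5 of Protocol~\ref{qae-protocol} are exactly the standard phase estimation circuit for the unitary $\mathcal{Q}$ on the input $|\Psi\rangle$: after applying $\QFT_M$ to the ancilla (equivalent up to a change of basis to the more usual Hadamards), the controlled powers $\Lambda_M(\mathcal{Q})$, and $\QFT_M^\dagger$, measuring the ancilla gives an integer $\hat y\in\{0,\ldots,M-1\}$. Decomposing $|\Psi\rangle$ along $|\Psi_\pm\rangle$, the outcome distribution is a balanced mixture of the phase-estimation distributions for the eigenphases $\theta_a/\pi$ and $1-\theta_a/\pi$. The classic BHMT/NC analysis of phase estimation then yields that, for either branch, the probability of obtaining the integer $y_\star$ closest to the true phase times $M$ (so that $|\pi \hat y / M - \theta_a| \le \pi/M$, taken modulo $\pi$ since $\sin^2$ is $\pi$-periodic and even about $\pi/2$) is at least $8/\pi^2$.

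The final step is purely trigonometric. Writing $\hat\theta \equiv \pi\hat y/M$ and using
\begin{equation}
\sin^2(\hat\theta) - \sin^2(\theta_a) = \sin(\hat\theta + \theta_a)\,\sin(\hat\theta - \theta_a),
\end{equation}
I would set $\Delta \equiv \hat\theta - \theta_a$ with $|\Delta|\le \pi/M$ (on the good event above) and bound
\begin{equation}
|\hat a - a| = \bigl|\sin(2\theta_a + \Delta)\bigr|\,\bigl|\sin(\Delta)\bigr| \le \bigl(|\sin(2\theta_a)| + |\Delta|\bigr)\,|\Delta| \le \frac{2\pi\sqrt{a(1-a)}}{M} + \frac{\pi^2}{M^2},
\end{equation}
using $|\sin(2\theta_a)| = 2\sqrt{a(1-a)}$ together with $|\sin x|\le |x|$ and the sublinearity $|\sin(x+y)|\le |\sin x|+|y|$. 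The $\hat y = M - y_\star$ branch is handled identically by symmetry of $\sin^2$ about $\pi/2$.

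The main obstacle I anticipate is being careful about two points: first, verifying that in the balanced mixture of the two eigenphase branches the ``good outcome'' set $|\pi\hat y/M - \theta_a|\le \pi/M \pmod{\pi}$ receives total mass at least $8/\pi^2$ rather than $8/\pi^2$ halved, which requires noticing that the two branches contribute the \emph{same} estimator $\hat a = \sin^2(\pi\hat y/M)$ on their respective closest integers; and second, arguing cleanly on the boundary cases $\theta_a\in\{0,\pi/2\}$ where $\sqrt{a(1-a)}=0$ and the second term $\pi^2/M^2$ must absorb the entire error. All other steps are standard manipulations that I would not belabor.
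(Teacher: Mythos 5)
Your argument is correct: the rotation-by-$2\theta_a$ structure of $\mathcal{Q}$ on the span of $|\Psi_0\rangle,|\Psi_1\rangle$, the $8/\pi^2$ phase-estimation bound applied to the two eigenphase branches (with the observation that both branches yield the same estimator $\hat a=\sin^2(\pi\hat y/M)$, so the mass is not halved), and the conversion $|\sin^2\hat\theta-\sin^2\theta_a|\le 2|\Delta|\sqrt{a(1-a)}+|\Delta|^2$ are precisely the ingredients of Theorem 12 (together with Theorem 11 and Lemma 7) of Brassard--H{\o}yer--Mosca--Tapp. The paper gives no proof of this proposition at all---it simply cites that theorem---so your reconstruction coincides with the cited source's argument rather than taking a different route.
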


\end{section}

\begin{section}{Fidelity Estimation Protocol}

Suppose Alice is able to prepare multiple copies of a state $|\psi\rangle \in (\mathbb{C}^2)^n$ from the state $|0\rangle^{\otimes n}$ via a unitary $\mathcal{A}$, as in Protocol~\ref{qae-protocol}, and Bob has access to multiple copies of a state $\rho \in \mathcal{D}(\mathbb{C}^2)^n$ that is not-necessarily pure. We assume that neither has any knowledge of the other's state, nor that either knows a classical description of their own state. We propose the following protocol for Alice and Bob to estimate the fidelity of their states $\psi$ and $\rho$.

\begin{protocol}[Fidelity Estimation Protocol]\label{fid-est-protocol}
The protocol goes as follows.
    \begin{enumerate}
        \item Bob draws $N$ independent and identically distributed samples $\hat{C}_1, \dots, \hat{C}_N$ from the uniformly weighted Clifford group $\mathcal{C}_n$. He then rotates $N$ copies of her state $\rho$ according to her sampled Clifford operators, obtaining the states $\hat{C}_1 \rho \hat{C}_1^\dagger, \dots, \hat{C}_N \rho \hat{C}_N^\dagger$. He measures each rotated state $\hat{C}_i \rho \hat{C}_i^\dagger$ in the computational basis $\{|b\rangle\}_{b\in \{0,1\}^N}$, obtaining the post-measurement state $|\hat{b}_i\rangle$ on the state $\hat{C}_i \rho \hat{C}_i^\dagger$. Thus, 
        \begin{equation}
            \Pr(\hat{C}_i = C~{\rm and}~\hat{b}_i = b) = \frac{\langle b|C\rho C^\dagger |b\rangle}{|\mathcal{C}_n|}.
        \end{equation}
        Subsequently, he sends a classical description of her sampled data $\hat{C}_i$, $|\hat{b}_i\rangle$, $1\leq i \leq N$, over to Bob.
        \item For each $i \in \{1, \dots, N\}$, Alice estimates the probability amplitude $\hat{A}_{i}\equiv |\langle \hat{b}_i|\hat{C}_i|\psi\rangle|^2$ as follows. Fix a one-to-one correspondence between the bit strings $\{0,1\}^n$ and integers $\{0, \dots, 2^{n}-1\}$. Alice uses $K$ independent applications of the quantum amplitude estimation algorithm (Protocol~\ref{qae-protocol}) with $|\Psi\rangle = \hat{C}_i|\psi\rangle$ and $\chi:\{0, \dots, 2^n - 1\}\to \{0,1\}$ defined so that $\chi$ assumes value 1 only for the unique integer corresponding to $\hat{b}_i$. Each application consists of $M$ iterations. As a result of the $K$ independent applications of the amplitude estimation algorithm, Alice obtains a set of $K$ estimators $\hat{a}_{i, 1}, \dots, \hat{a}_{i, K}$ of $\hat{A}_i$.  (Thus, the total number of operations he performs in this step is $N\times K\times M$.)
        \item Alice computes
        \begin{equation}
            \hat{F} \equiv \frac{(d+1)}{N} \sum_{i=1}^N \med(\{\hat{a}_{i,j}\}_{j=1}^K) - 1
        \end{equation}
        and declares $\hat{F}$ to be an estimator of the true fidelity $F(\rho, \psi)$.
    \end{enumerate}
\end{protocol}

\begin{proposition}[Protocol~\ref{fid-est-protocol} Performance Guarantee] \label{prop:performance-guarantee} 
    Let $\varepsilon, \delta \in (0,1)$ be such that $\left\lceil\frac{24}{\varepsilon^2\delta} \right\rceil \leq \left(\frac{13}{6}\right)^4\frac{\delta}{12\varepsilon^4}$. Suppose we choose $N$, $M$, and $K$ in Protocol~\ref{fid-est-protocol} to satisfy
    \begin{equation}\label{eq:state-copy-bounds}
        M \geq \frac{2\pi \sqrt{3(d+1)}}{(6\varepsilon/13)^2}, \ \ \ \,  \frac{24}{\varepsilon^2\delta} \leq N \leq \left(\frac{13}{6}\right)^4\frac{\delta}{12\varepsilon^4}, \ \ \ \,  K \geq \frac{1}{2\left(\frac{8}{\pi^2} - \frac{1}{2}\right)^2}\ln\left(\frac{4N}{\delta} \right).
    \end{equation}
    Then
    \begin{equation}
        \Pr(|\hat{F} - F(\rho, \psi)| > \varepsilon) \leq \delta.
    \end{equation}
    The minimum total number of basic operations between Alice and Bob that satisfies the bounds in~\eqref{eq:state-copy-bounds} is
    \begin{equation}
        N_{\rm tot} = \mathcal{O}\left(\frac{1}{\varepsilon^2\delta} + \frac{\sqrt{d}}{\varepsilon^4\delta} \ln \left(\frac{1}{\varepsilon^2\delta^2}\right)\right).
    \end{equation}
\end{proposition}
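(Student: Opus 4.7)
The plan is to decompose the total error into an amplitude-estimation piece and a shadow-tomography piece, bound each by $\varepsilon/2$ on a high-probability event, and finish with a union bound. Writing $A_i \equiv |\langle \hat b_i|\hat C_i|\psi\rangle|^2$ and $\hat M_i \equiv \med(\{\hat a_{i,j}\}_{j=1}^K)$, Lemma~\ref{lemma:overlap-statistics} gives $F = (d+1)\mathbb{E}[A_i] - 1$, so
\begin{equation*}
    \hat F - F \;=\; \underbrace{\frac{d+1}{N}\sum_{i=1}^N (\hat M_i - A_i)}_{\Delta_{\rm QAE}} \;+\; \underbrace{\frac{d+1}{N}\sum_{i=1}^N (A_i - \mathbb{E}[A_i])}_{\Delta_{\rm CST}}.
\end{equation*}
The value of this split is that $\Delta_{\rm CST}$ depends only on Bob's randomness, whereas $\Delta_{\rm QAE}$—conditional on Bob's outputs $(\hat C_i, |\hat b_i\rangle)_{i=1}^N$—depends only on Alice's.

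For $\Delta_{\rm CST}$, Lemma~\ref{lemma:overlap-statistics} bounds the per-sample variance by $3/(d+1)^2$, so a single Chebyshev estimate on the sample mean gives $\Pr(|\Delta_{\rm CST}|>\varepsilon/2) \le 12/(N\varepsilon^2)$; demanding this be at most $\delta/2$ reproduces exactly the lower bound $N \ge 24/(\varepsilon^2\delta)$ stated in the proposition. For $\Delta_{\rm QAE}$, I would combine the QAE guarantee (Proposition~\ref{prop:qae-guarantee}), which places each $\hat a_{i,j}$ within $2\pi\sqrt{A_i(1-A_i)}/M + \pi^2/M^2$ of $A_i$ with probability $\ge 8/\pi^2>1/2$, with Lemma~\ref{lemma:median-concentration}, which amplifies this to failure probability $\exp(-2(8/\pi^2-1/2)^2 K)$ for the median $\hat M_i$. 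The stated lower bound on $K$ and a union bound over $i$ put all $N$ of the $\hat M_i$ simultaneously in their guarantee windows with failure probability $\le \delta/4$; on that event the triangle inequality and Cauchy--Schwarz (using $1-A_i \le 1$) yield
\[
    |\Delta_{\rm QAE}| \;\le\; \frac{2\pi(d+1)}{M}\sqrt{\frac{1}{N}\sum_i A_i} \;+\; \frac{\pi^2(d+1)}{M^2}.
\]

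The subtle step, and the main obstacle, is that the per-term QAE error depends on the \emph{random} amplitude $A_i$ through $\sqrt{A_i(1-A_i)}$: bounding it pointwise by the worst-case $\pi/M$ would destroy the $\sqrt{d}$ improvement, since it is crucial that typical $A_i$ are of order $1/(d+1)$ rather than of order $1$. The remedy is a third good event, obtained by applying Chebyshev once more to $\frac{1}{N}\sum_i A_i$ (which has mean $(1+F)/(d+1)\le 2/(d+1)$ and per-sample variance $\le 3/(d+1)^2$): provided $N$ is not too small (which is already enforced), this quantity is at most $3/(d+1)$ with failure probability $\delta/4$. Substituting into the display above and using $M \ge 2\pi\sqrt{3(d+1)}/(6\varepsilon/13)^2$ makes both QAE summands fall well below $\varepsilon/2$ for $\varepsilon\in(0,1)$. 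A three-way union bound over the CST event, the QAE-median event, and the $\sum_i A_i$-concentration event, with failure budget $\delta/2 + \delta/4 + \delta/4 = \delta$, then closes the argument.

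The cost statement then falls out by substituting the minimum admissible parameters $N=\Theta(1/(\varepsilon^2\delta))$, $M=\Theta(\sqrt{d}/\varepsilon^2)$, and $K=\Theta(\ln(N/\delta))$ into $N_{\rm tot} = NMK + N$; the upper bound on $N$ in the proposition plays only a bookkeeping role, ensuring that $\ln(4N/\delta)$ can be absorbed into the advertised $\ln(1/(\varepsilon^2\delta^2))$ factor so that the stated closed-form scaling holds.
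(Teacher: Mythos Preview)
Your argument is correct and shares its skeleton with the paper's proof: the same $\Delta_{\rm CST}+\Delta_{\rm QAE}$ decomposition, the same Chebyshev bound $\Pr(|\Delta_{\rm CST}|>\varepsilon/2)\le 12/(N\varepsilon^2)\le \delta/2$ for the shadow piece, and the same use of Proposition~\ref{prop:qae-guarantee} together with Lemma~\ref{lemma:median-concentration} to control each median $\hat M_i$. The genuine difference is in how you tame the random factor $\sqrt{A_i(1-A_i)}$ in the QAE error. The paper conditions \emph{per sample}: for each $i$ it invokes Chebyshev to force $|\hat A_i - E|\le 1/(\eta^2(d+1))$ (with $\eta=6\varepsilon/13$), shows that on this event the QAE window shrinks to $\varepsilon/(2(d+1))$, and then union-bounds over $i$. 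That per-sample Chebyshev costs $3\eta^4$ each, so after summing one needs $3(6/13)^4 N\varepsilon^4\le \delta/4$, which is precisely the origin of the upper bound on $N$. You instead aggregate first---using Cauchy--Schwarz to pass from $\frac{1}{N}\sum_i\sqrt{A_i}$ to $\sqrt{\frac{1}{N}\sum_i A_i}$---and then apply a \emph{single} Chebyshev to the sample average, whose failure probability $3/N$ only improves with $N$. Your route therefore eliminates the need for the upper bound on $N$ in the probability guarantee (it survives only, as you say, to normalize the $\ln(4N/\delta)$ in the cost formula); this in fact resolves the paper's own remark following the proposition that the upper bound is likely an artifact. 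One small point you leave implicit: the third good event $\frac{1}{N}\sum_i A_i\le 3/(d+1)$ has failure probability at most $3/N$, and you need $3/N\le \delta/4$; this follows from the enforced $N\ge 24/(\varepsilon^2\delta)\ge 12/\delta$, so there is no gap.
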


While the restriction $\lceil 24/(\varepsilon^2\delta) \rceil \leq (13/6)^4 \delta/(12\varepsilon^2)$ in the proposition might not appear to be very transparent at first, it may easily be satisfied by many choices of $\varepsilon$ and $\delta$ desired in practical settings. For example, for $\varepsilon<1$, setting $\delta = 24\varepsilon$ will satisfy this bound. Furthermore, the thresholds for $N, K, M$ in the proposition were chosen for the sake of convenience, not necessarily because they are tight. With a more clever method of data processing and more careful selection of bounds, it seems likely that one could obtain a rigorous performance guarantee that does not require upper bounding $N$, and hence, eliminates the restriction that $\lceil 24/(\varepsilon^2\delta) \rceil \leq \delta/(12\varepsilon^4)$.

We now prove Proposition~\ref{prop:performance-guarantee}.

\begin{proof} Define the random variable $\hat{Y} \equiv \frac{d+1}{N} \left(\sum_{i=1}^N \hat{A}_i\right) - 1$. Note that $\mathbb{E}[\hat{Y}] = F(\rho, \psi)$ by Lemma~\ref{lemma:overlap-statistics}. Then we may apply Chebyshev's inequality to get
\begin{equation}
    \Pr\left(|\hat{Y} - F(\rho, \psi)| > \frac{\varepsilon}{2}\right) \leq \frac{4 \var(\hat{Y})}{\varepsilon^2} = \frac{4 (d+1)^2 \var\left(\frac{1}{N} \sum_{i=1}^N \hat{A}_i\right)}{\varepsilon^2} 
    = \frac{4(d+1)^2 \var(|\langle \hat{b}_1|\hat{C}_1|\psi\rangle|^2)}{N\varepsilon^2} \leq \frac{12}{N\varepsilon^2} \leq \frac{\delta}{2},
\end{equation}
where to obtain the second inequality, we used~\eqref{eq:overlap-variance}.

We now bound the difference between $\hat{Y}$ and our fidelity estimator $\hat{F}$. We have
\begin{align}
    \Pr\left(|\hat{F} - \hat{Y}| >\label{eq:f-y-diff} \frac{\varepsilon}{2}\right) &=
    \Pr\left(\left| \sum_{i=1}^N (\med(\{\hat{a}_{i,j}\}_{j=1}^K) - \hat{A}_i) \right| > \frac{N \varepsilon}{2(d+1)} \right) \\ 
    &\leq \Pr\left( \sum_{i=1}^N \left|\med(\{\hat{a}_{i,j}\}_{j=1}^K) - \hat{A}_i\right| > \frac{N \varepsilon}{2(d+1)} \right) \\
    &\leq\label{ineq:union-bound} \sum_{i=1}^N \Pr\left(\left|\med(\{\hat{a}_{i,j}\}_{j=1}^K) - \hat{A}_i \right| > \frac{\varepsilon}{2(d+1)}\right).
\end{align}
See~\footnote{It may be tempting to apply Hoeffding's inequality here. However, we have no guarantee that 
\begin{equation}
    \mathbb{E}\left[\sum_{i=1}^N \med(\{\hat{a}_{i,j}\}_{j=1}^K) \right] = \sum_{i=1}^N \hat{A}_i,
\end{equation}
where the expectation value is over the distribution produced by the measurements in the amplitude estimation protocol. This leads us to use a combination of the triangle inequality and a union bound instead, giving a rather loose bound. 
} for comments on this step.

Recall that
\begin{equation}
    M \geq \frac{2\pi\sqrt{3(d+1)}}{\eta^2}, \quad \eta \equiv \frac{6\varepsilon}{13}.
\end{equation}
by assumption and define
\begin{equation}
    E \equiv \frac{1+ F(\rho, \psi)}{d+1}.
\end{equation}
Then 
\begin{align}
    \Pr\left(\underbrace{\left|\med(\{\hat{a}_{i,j}\}_{j=1}^K) - \hat{A}_i \right| > \frac{\varepsilon}{2(d+1)}}_{Z}\right) &= 
    \Pr\left(Z {\rm \ and \ } 
    |\hat{A}_i - E| \leq \frac{1}{\eta^2(d+1)}\right) +
    \Pr\left(Z {\rm \ and \ }
    |\hat{A}_i - E| > \frac{1}{\eta^2(d+1)} \right) \\
    &\leq \Pr\left(Z {\rm \ and \ } 
    |\hat{A}_i - E| \leq \frac{1}{\eta^2(d+1)}\right) +
    \Pr\left(|\hat{A}_i - E| > \frac{1}{\eta^2(d+1)}\right) \\
    &\leq\label{eq:Chebyshev-application} \Pr\left(Z {\rm \ and \ } 
    |\hat{A}_i - E| \leq \frac{1}{\eta^2(d+1)}\right) + 
    \frac{\frac{3}{(d+1)^2}}{\frac{1}{\eta^4(d+1)^2}} \\
    &= \Pr\left(Z {\rm \ and \ } 
    |\hat{A}_i - E| \leq \frac{1}{\eta^2(d+1)}\right) + 3\eta^4 \\
    &= \Pr\left(Z {\rm \ and \ } 
    |\hat{A}_i - E| \leq \frac{1}{\eta^2(d+1)}\right) + 3\left(\frac{6}{13}\right)^4 \varepsilon^4.
\end{align}
Chebyshev's inequality is used to obtain~\eqref{eq:Chebyshev-application}. Note that $|\hat{A}_i - E| \leq 1/[\eta^2(d+1)]$ implies that $\hat{A}_i \leq E + 1/[\eta^2(d+1)]$.

Hence, $|\hat{A}_i - E| \leq 1/[\eta^2(d+1)]$ implies that
\begin{align}
    \frac{2\pi \sqrt{\hat{A}_i (1 - \hat{A}_i)}}{M} + \frac{\pi^2}{M^2} &\leq \frac{\eta^2 \sqrt{\hat{A}_i(1-\hat{A}_i})}{\sqrt{3(d+1)}} + \frac{\eta^4}{12(d+1)}\\
    &\leq 
    \frac{\eta^2{\sqrt{\hat{A}_i}}}{\sqrt{3(d+1)}} + \frac{\eta^4}{12(d+1)} \\
    &\leq 
    \frac{\eta^2}{\sqrt{3(d+1)}}\sqrt{\frac{1+F(\rho, \psi)}{d+1} + \frac{1}{\eta^2(d+1)}} + \frac{\eta^4}{12(d+1)} \\
    &\leq 
    \frac{\eta^2}{\sqrt{3(d+1)}}
    \sqrt{\frac{3}{\eta^2(d+1)}} + 
    \frac{\eta^4}{12(d+1)} \\
    &\leq \frac{\eta}{d+1} + \frac{\eta^4}{12(d+1)} \\
    &\leq \frac{13\eta}{12(d+1)} \\
    &= \frac{\varepsilon}{2(d+1)}.
\end{align}
Hence, we get that
\begin{align}
    \Pr(Z) &=\label{eq:pr-z} \Pr\left(\left|\med(\{\hat{a}_{i,j}\}_{j=1}^K) - \hat{A}_i\right| > \frac{\varepsilon}{2(d+1)}~{\rm and}~|\hat{A}_i - E| \leq \frac{1}{\eta^2(d+1)} \right) + 3\left(\frac{6}{13}\right)^4\varepsilon^4 \\
    &\leq \Pr\left(\left|\med(\{\hat{a}_{i,j}\}_{j=1}^K) - \hat{A}_i\right| > \frac{2\pi\sqrt{\hat{A}_i(1-\hat{A}_i)}}{M} + \frac{\pi^2}{M^2} \right) + 3\left(\frac{6}{13}\right)^4\varepsilon^4 \\
    &\leq\label{eq:application-of-median-conc} \exp\left(-2\left(\frac{8}{\pi^2} - \frac{1}{2}\right)^2 K\right) + 3\left(\frac{6}{13}\right)^4\varepsilon^4,
\end{align}
where~\eqref{eq:application-of-median-conc} follows from Lemma~\ref{lemma:median-concentration} and Proposition~\ref{prop:qae-guarantee}. 

Substituting~\eqref{eq:application-of-median-conc} into~\eqref{ineq:union-bound}, we get that
\begin{equation}
    \Pr\left(|\hat{F} - \hat{Y}| > \frac{\varepsilon}{2}\right) \leq N \exp\left(-2\left(\frac{8}{\pi^2} - \frac{1}{2}\right)^2 K\right) + 
    3\left(\frac{6}{13}\right)^4 N\varepsilon^4 \leq \frac{\delta}{4} + \frac{\delta}{4} = \frac{\delta}{2}.
\end{equation}
It then follows that
\begin{align}
    \Pr(|\hat{F} - F(\rho, \psi)| > \varepsilon) &\leq \Pr(|\hat{F} - \hat{Y}| + |\hat{Y} - F(\rho, \psi)| > \varepsilon) \\
    &\leq \Pr(|\hat{F} - \hat{Y}|> \frac{\varepsilon}{2}) + \Pr(|\hat{Y} - F(\rho, \psi)| > \frac{\varepsilon}{2}) \\
    &\leq \frac{\delta}{2} + \frac{\delta}{2} \\ 
    &= \delta.
\end{align}

Finally, since Bob uses $N$ copies of his state, and since for each of Bob's $N$ measurement results, Alice uses $KM$ basic operations, the total number of basic operations between Alice and Bob in this protocol is $N + NKM$. If we choose the ceiling of the lower threshold for $N, K, M$ provided by~\eqref{eq:state-copy-bounds}, we get that
\begin{equation}
        N_{\rm tot} = \mathcal{O}\left(\frac{1}{\varepsilon^2\delta} + \frac{\sqrt{d}}{\varepsilon^4\delta} \ln \left(\frac{1}{\varepsilon^2\delta^2}\right)\right).
    \end{equation}

\end{proof}

We note that Alice and Bob may mitigate outlier corruption in their fidelity estimate by performing median-of-means estimation. That is, they repeat Protocol~\ref{fid-est-protocol} a number $P$ times, using $N$ of Alice's state copies, $M$ iterations of the quantum amplitude estimation algorithm, and $K$ independent applications of the amplitude estimation algorithm in each run. As a result, they get $P$ fidelity estimators $\hat{F}_1, \dots, \hat{F}_P$. They then use $\hat{F}_{\med} \equiv \med(\{\hat{F}_i\}_{i=1}^P)$ as their estimate for the true fidelity $F(\rho, \psi)$. Using this median-of-means strategy, we get a guarantee for their fidelity estimator with a faster theoretical convergence:

\begin{corollary}
    For any $\varepsilon \in (0, 1)$ and $\delta \in (0, 0.09)$, Alice and Bob may employ a median-of-means estimation strategy in which Bob uses
    \begin{equation}
        \mathcal{O}\left(\frac{\ln(1/\delta)}{\varepsilon^2}\right)
    \end{equation}
    state copies and Alices uses
    \begin{equation}
        \mathcal{O}\left(\frac{\ln(1/\epsilon^2) \ln(1/\delta)\sqrt{d}}{\varepsilon^4}\right)
    \end{equation}
    total iterations of the basic quantum amplitude estimation algorithm to get
    \begin{equation}
        \Pr(|\hat{F}_{\med} - F(\rho, \psi)| > \varepsilon) \leq \delta.
    \end{equation}
\end{corollary}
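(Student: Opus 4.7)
The plan is to use Proposition~\ref{prop:performance-guarantee} as a base estimator whose success probability is a \emph{fixed} constant $1-\delta_0$, and then to amplify its confidence up to $1-\delta$ by running it $P$ independent times and taking the median of the resulting estimators---the standard ``median trick.'' This cleanly separates the dependence on $\varepsilon$ (handled by the base protocol) from the dependence on $\delta$ (handled by the boosting).

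Concretely, I would fix some constant $\delta_0 \in (0, 1/2)$ (for instance $\delta_0 = 1/3$) and apply Proposition~\ref{prop:performance-guarantee} with tolerance $\varepsilon$ and failure probability $\delta_0$. A single invocation of Protocol~\ref{fid-est-protocol} with the prescribed thresholds then yields an estimator $\hat{F}_i$ satisfying $\Pr(|\hat{F}_i - F(\rho,\psi)| > \varepsilon) \leq \delta_0$, and, with $\delta_0$ absorbed into the constants, its cost is
\begin{equation*}
    N_B^{(1)} = \mathcal{O}\!\left(\frac{1}{\varepsilon^2}\right), \qquad N_A^{(1)} = N_B^{(1)} K M = \mathcal{O}\!\left(\frac{\sqrt{d}\,\ln(1/\varepsilon^2)}{\varepsilon^4}\right),
\end{equation*}
where the $\sqrt{d}/\varepsilon^2$ factor comes from $M$ and the $\ln(1/\varepsilon^2)$ from $K = \Theta(\ln(N/\delta_0))$ with $N = \Theta(1/\varepsilon^2)$.

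Next, I would run the base protocol $P$ times independently to obtain i.i.d.\ estimators $\hat{F}_1, \dots, \hat{F}_P$ and define $\hat{F}_{\med} \equiv \med(\{\hat{F}_i\}_{i=1}^P)$. Applying Lemma~\ref{lemma:median-concentration} with $a = F(\rho,\psi)$, tolerance $\varepsilon$, and per-run failure probability $\delta_0 < 1/2$ gives
\begin{equation*}
    \Pr(|\hat{F}_{\med} - F(\rho,\psi)| > \varepsilon) \leq \exp\!\left(-2\left(\tfrac{1}{2} - \delta_0\right)^2 P\right),
\end{equation*}
which is at most $\delta$ once $P = \left\lceil \ln(1/\delta)/[2(1/2 - \delta_0)^2] \right\rceil = \mathcal{O}(\ln(1/\delta))$. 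Multiplying the per-run resource counts by this $P$ reproduces exactly the claimed bounds $N_B = \mathcal{O}(\ln(1/\delta)/\varepsilon^2)$ and $N_A = \mathcal{O}(\ln(1/\varepsilon^2)\ln(1/\delta)\sqrt{d}/\varepsilon^4)$.

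The main obstacle is not a new probabilistic argument---it is verifying that the chosen single-run pair $(\varepsilon, \delta_0)$ actually satisfies the restrictive compatibility condition $\lceil 24/(\varepsilon^2 \delta_0) \rceil \leq (13/6)^4 \delta_0/(12\varepsilon^4)$ demanded by Proposition~\ref{prop:performance-guarantee}. Since this condition is quadratic in $\delta_0$ and quartic in $1/\varepsilon$, it forces a careful joint choice of $\delta_0$ and the amplification factor $P$, and it is precisely this bookkeeping that pins down the numerical ceiling $\delta < 0.09$ in the statement (any $\delta$ below this ceiling admits an allowable constant $\delta_0$ for which the median trick goes through uniformly in $\varepsilon \in (0,1)$). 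Beyond this constant-chasing, no new machinery is required: the corollary follows by combining Proposition~\ref{prop:performance-guarantee} with Lemma~\ref{lemma:median-concentration}.
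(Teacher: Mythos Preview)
Your proposal is correct and mirrors the paper's proof exactly: fix a constant $\delta_0 = 1/3$, invoke Proposition~\ref{prop:performance-guarantee} at tolerance $(\varepsilon,\delta_0)$ to obtain a single-run estimator with the stated per-run costs, then apply Lemma~\ref{lemma:median-concentration} to $P = \mathcal{O}(\ln(1/\delta))$ independent repetitions and multiply through. Even your identification of the compatibility condition $\lceil 24/(\varepsilon^2\delta_0)\rceil \leq (13/6)^4 \delta_0/(12\varepsilon^4)$ as the sole bookkeeping obstacle---and the source of the numerical ceiling $0.09$---matches the paper's treatment.
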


\begin{proof}
    Let $\varepsilon \in (0,1)$ and $\delta \in (0, 0.09)$ be arbitrary. Choose $\delta_0 = 1/3$. Since $\delta <0.09$, we have $\lceil 24/(\varepsilon^2 \delta_0)\rceil \leq (13/6)^4 \delta_0/(12 \varepsilon^4)$. It follows from Proposition~\ref{prop:performance-guarantee} that by choosing
    \begin{equation}
        M = \left\lceil \frac{2\pi \sqrt{3(d+1)}}{(6\varepsilon/13)^2}\right\rceil, \ \ \ \ N = \left \lceil \frac{24}{\varepsilon^2\delta_0} \right \rceil = \left \lceil \frac{72}{\varepsilon^2} \right \rceil, \ \ \ \ K = \left \lceil \frac{1}{2\left(\frac{8}{\pi^2} - \frac{1}{2}\right)^2} \ln \left(\frac{4N}{\delta_0} \right) \right \rceil = \left \lceil \frac{1}{2\left(\frac{8}{\pi^2} - \frac{1}{2}\right)^2} \ln \left(12\left\lceil \frac{72}{\varepsilon^2} \right\rceil \right) \right \rceil
    \end{equation}
    in Protocol~\ref{fid-est-protocol}, we get a fidelity estimator $\hat{F}$ for which
    \begin{equation}
        \Pr(|\hat{F} - F(\rho, \psi)| > \varepsilon) \leq \delta_0 = \frac{1}{3}.
    \end{equation}
    The $P$ fidelity estimators $\hat{F}_1, \dots, \hat{F}_P$ obtained by repeating Protocol~\ref{fid-est-protocol} a number $P$ times with this choice of $N$, $K$, and $M$ all satisfy this concentration inequality. Since $\delta_0 < 1/2$, Lemma~\ref{lemma:median-concentration} then reveals that
    \begin{equation}
        \Pr(|\hat{F}_{\med} - F(\rho, \psi)| > \varepsilon) \leq \exp\left(-2 \left(\frac{1}{2} - \delta_0 \right)^2 P\right) = \exp\left( - \frac{P}{18} \right).
    \end{equation}
    Consequently, by choosing $P = \lceil 18 \ln(1/\delta) \rceil$, we get that $\Pr(|\hat{F}_{\med} - F(\rho, \psi)| > \varepsilon) \leq \delta$. The total number of state copies that Bob uses across all applications of Protocol~\ref{fid-est-protocol} is
    \begin{equation}
        P \times N = \left \lceil 18 \ln(1/\delta) \right \rceil \left\lceil \frac{72}{\varepsilon^2} \right\rceil = \mathcal{O}\left(\frac{\ln(1/\delta)}{\varepsilon^2}\right).
    \end{equation}
    The total number of iterations that Alice uses is 
    \begin{equation}
        P \times N \times K \times M = \left \lceil 18 \ln\left(\frac{1}{\delta}\right) \right \rceil \left \lceil \frac{72}{\varepsilon^2} \right \rceil \left \lceil \frac{1}{2\left(\frac{8}{\pi^2} - \frac{1}{2}\right)^2} \ln \left(12\left\lceil \frac{72}{\varepsilon^2} \right\rceil \right) \right \rceil \left\lceil \frac{2\pi \sqrt{3(d+1)}}{(6\varepsilon/13)^2}\right\rceil = \mathcal{O}\left(\frac{\ln(1/\varepsilon^2) \ln(1/\delta) \sqrt{d}}{\varepsilon^4} \right).
    \end{equation} 
    
\end{proof}

\end{section}

\begin{section}{Numerical Simulations}

\begin{figure*}[t!]
    \includegraphics[width=\textwidth]{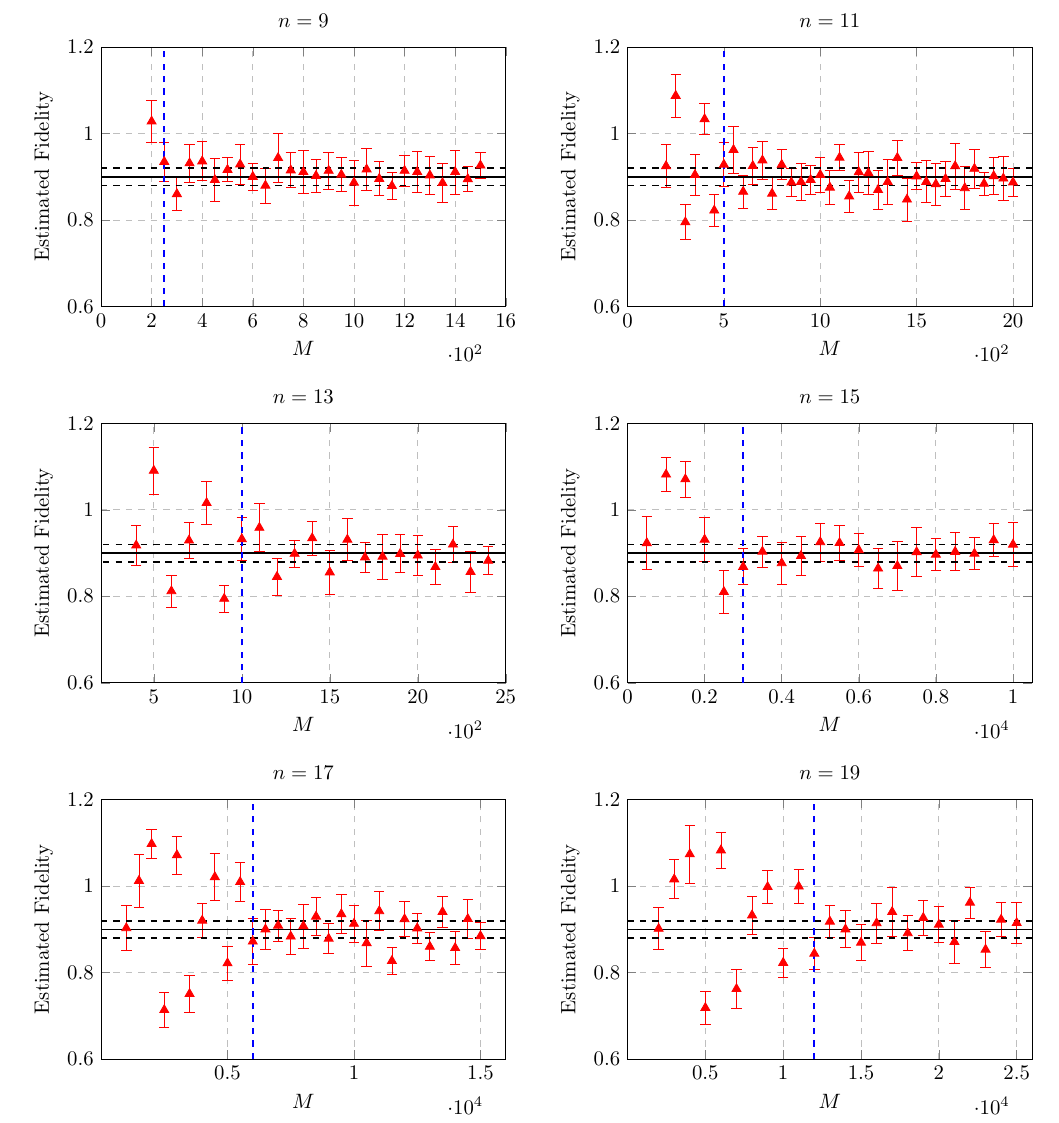}
    \caption{\label{fig:sm:simulations}The mean and standard deviations of the estimated fidelity for various $M$ at different system sizes. The vertical dashed lines correspond to the value of $M$ for which the fidelity estimation achieves the desired precision, as indicated by the horizontal dashed line (see main text for the procedure).}
\end{figure*}

The probability distribution of amplitude estimators for the quantum amplitude estimation protocol is essentially a distribution over the possible bit strings of length $m$. For a bit string $z\in \{0,1\}^m$, let $B(z) \equiv \sum_{l=1}^m z_l 2^{l-1}$, where $z_l$ is the $l$-th bit of $z$. Let $\theta_a \in [0,\pi)$ be such that $|\langle b|C|\psi\rangle|^2 = \sin^2 \theta_a$. For real $\omega_1, \omega_2$, let $d(\omega_1, \omega_2) \equiv \min_{p \in \mathbb{Z}} |p + \omega_1 - \omega_2|$. Recall also that $M \equiv 2^m$. The probability of obtaining the outcome corresponding to a bit string $z\in \{0,1\}^m$ is then~\cite{Brassard2002}
\begin{equation}
    \Pr(\hat{z} = z) = \frac{1}{2} \frac{\sin^2\left(M \pi d(\theta_a/\pi, B(z)/M)\right)}{M^2 \sin^2\left(\pi d(\theta_a/\pi, B(z)/M)\right)}
    + \frac{1}{2} \frac{\sin^2\left(M \pi d(1 - \theta_a/\pi, B(z)/M)\right)}{M^2 \sin^2\left(\pi d(1 - \theta_a/\pi, B(z)/M)\right)}.
\end{equation}
This probability distribution was utilized extensively in our numerical simulations. Details for the simulation of the full fidelity estimation protocol are discussed in the main text.

Another essential component of our numerics was our approach to showing the scaling of the resources required to achieve an accuracy of $0.02$ in our fidelity estimate. For this purpose, we computed empirical distributions of fidelity estimators for varying values of $M$, whose means and standard deviations are shown in Fig.~\ref{fig:sm:simulations}. The horizontal dashed black lines indicate an interval of radius 0.02 around the true fidelity, while the vertical dashed blue lines indicate the value of $M$ which we deemed sufficient to estimate $F$ with accuracy 0.02 (that is, for the majority of higher values of $M$, their error bars lie with in the strip of radius $0.02$ about the true fidelity). These values of $M$ are plotted in Fig.~2 of the main text.

\end{section}

\end{document}